
\documentclass[reqno]{amsart}
\usepackage{amsmath,amssymb}
\usepackage{graphics,epsfig,color}
\usepackage[mathscr]{euscript}
\usepackage{mathtools}
\usepackage{enumerate} 
\usepackage{cancel}
\newtheorem{theorem}{Theorem}[section]
\newtheorem{proposition}[theorem]{Proposition}

\theoremstyle{definition}
\newcounter{assumption}
\newcounter{definition}
\newtheorem{definition}[definition]{Definition}
\newtheorem{assumption}[assumption]{Assumption}
\newcounter{remark}
\newtheorem{remark}[remark]{Remark}
\numberwithin{assumption}{section}
\numberwithin{definition}{section}
\numberwithin{remark}{section}
\numberwithin{equation}{section}
\numberwithin{theorem}{section}
\newcommand{\mc}[1]{{\mathcal #1}}
\newcommand{\bb}[1]{{\mathbb #1}}
\newcommand{\ms}[1]{{\mathscr #1}}

\newcommand{\R}{\mathbb{R}}
\newcommand{\eps}{\varepsilon}
\newcommand{\vt}{\vartheta}
\newcommand{\e}{\epsilon}
\newcommand{\ee}{\mathrm{e}}

\newcommand{\id}{{1 \mskip -5mu {\rm I}}}



\newcommand{\Ent}{\mathop{\rm Ent}\nolimits}
\newcommand{\E}{\mathop{\rm E\,\!}\nolimits}
\newcommand{\HH}{\mathop{\rm H\,\!}\nolimits}

\newcommand{\de}{\mathop{}\!\mathrm{d}}
\newcommand{\opde}[1]{{\dfrac {\mathrm{d} \phantom{#1}}{\mathrm{d} {#1}}}}

\newcommand{\MM}{{\mathcal M}}
\newcommand{\SM}{{\mathcal S}}
\newcommand{\MB}{{\ms M}}
\newcommand{\SB}{{\ms S}}

\title{Entropy dissipation inequality for general binary collision models}
\author[G.\ Basile]{Giada Basile}
\address{Giada Basile \hfill\break \indent
  Dipartimento di Matematica, Universit\`a di Roma `La Sapienza'
  \hfill\break \indent
  P.le Aldo Moro 2, 00185 Roma, Italy}
\email{basile@mat.uniroma1.it}
\author[D.\ Benedetto]{Dario Benedetto}
\address{Dario Benedetto \hfill\break \indent
  Dipartimento di Matematica, Universit\`a di Roma `La Sapienza'
  \hfill\break \indent
  P.le Aldo Moro 2, 00185 Roma, Italy}
\email{benedetto@mat.uniroma1.it}

\begin{document}

\keywords{Continuous time Markov chains, non-reversible processes, homogeneous
  Boltzmann equation,
H-Theorem, Gradient flow}
\subjclass[2022]{35Q20 
    60J27 
    82C40 
}

\maketitle

\begin{abstract}
  We introduce a ``two-particle factorization''
  condition which allows us to formulate the homogeneous Boltzmann equation
  for non-reversible collision kernels in terms of an entropy inequality.
  This formulation yields an H-Theorem.
  We provide some examples of non-reversible binary collision models 
  with a concentration/dispersion mechanism,  as in opinion
  dynamics, which satisfy this condition.
  As a preliminary step, we also provide
  an  analogous  variational formulation of non-reversible continuous time
  Markov chains, expressed in terms of an entropy dissipation inequality.
\end{abstract}

\section{Introduction}
\label{sez:intro}

In recent years, a fruitful approach for the study of various
equations in mathematical physics has been their formulation in terms
of gradient flows, starting from the pioneering work \cite{JKO}, and
the general theory developed in \cite{AGS} for diffusion.  See also
\cite{otto, cddw} for other equations.  In \cite{Ma,Mi,Er} the
Fokker-Planck equation associated with reversible continuous time
Markov chains has been formulated in terms of energy variational
inequalities.  For linear inhomogeneous Boltzmann equations,
a gradient flow formulation
has been introduced
in \cite{BBB}, 
expressed as  an entropy dissipation inequality.

A more challenging case is the Boltzmann equation.
A gradient flow formulation in terms of a metric interpretation of the
entropy inequality 
has been shown in  \cite{Er2} in the
space-homogeneous case, see also \cite{EHe} for a generalization to a
non-homogeneous
model. 
In \cite{BBO} a slightly different formulation has been proposed
by relating
the entropy inequalities to the large deviation
rate function for the Kac's walk, which is an
underlying microscopic model of the homogeneous Boltzmann equation. {In this setting the flux is considered as a dynamical variable, and the related  large deviation rate function  is  the relative entropy of the flux respect to the typical one. 
  The variational structure naturally follows  from a chain rule for the entropy,
  which involves also 
  the large deviation rate function of the time-reversed Kac's dynamics.
As a consequence, an $H$-theorem for the dynamics and for the time-reversed one holds.}
For the connection between entropy dissipation inequality and large deviations
see also \cite{ADPZ,MiPR}.

This paper explores the case of irreversible processes, {namely when the microscopic underlying dynamics does not satisfy the detailed balance condition. Our goal is  to formulate a variational characterization of the irreversible dynamics in terms of an entropy inequality, which still involves the large deviation rate function of the dynamics and of the time-reversed one.}

{As first step we consider non-reversible, continuous time Markov chains on a general state space $\mc X$.
  Under the assumption that the time-reversed transition rate exists with respect to an invariant measure $\pi$ of the chain and assuming some sufficient conditions on the transition rate, we  prove a chain rule for the entropy.
 More precisely,  denoting by $\Ent(\cdot\vert\cdot)$ the relative entropy between two positive measures, we obtain
  \begin{equation*}
  \Ent (P_T|\pi) + \Ent(\mc V|\Upsilon _\# \hat{\mc V}^P) =  \Ent(P_0|\pi) +\Ent(\mc V|\mc V^P), 
  \end{equation*}
  where $(P_t)_{t\geq 0}$ is a probability measure for the chain, $\mc V$ is the related flux, $\mc V^P$, $\hat{\mc V}^P$ are the typical flux of the dynamics and of the time-reversed process, respectively, while $\Upsilon$ is the operator which revert the states before and after the jump. The connection with  large deviations is the following. The right-hand side of the chain rule  is the large deviation rate function for the
  empirical measure and flux of $N$ independent copies of the chain, while the left-hand side is the large deviation rate function for empirical measure and flux of the time-reversed dynamics.
The equality coincides with the so called
generalized  Onsager-Machlup relation, see
  \cite{BDGJL}.
}

As a consequence, we establish
a variational formulation
in terms of an entropy dissipation inequality for
the probability measure $(P_t)$ and the related flux $\mc V$.
In particular,
we show that the Kolmogorov forward equation for $P$ is equivalent to the inequality
$$\Ent (P_T|\pi) + \Ent(\mc V|\Upsilon _\# \hat {\mc V}^P) \le \Ent(P_0|\pi),$$
It turns out that this inequality
is fulfilled if and only if $\mc V={\mc V}^P$, i.e. if $P$  satisfies 
the Kolmogorov equation.

{As second and more ambitious step we aim to give a variational inequality for a Boltzmann equation
when the underlying microscopic dynamics does not satisfy the detailed
balance condition, in the same spirit of  \cite{BBO}, i.e. by deriving
the chain rule for the entropy involving
the large deviation rate function of the microscopic dynamics
and of the time reversed one.
We stress that in the non-reversible case, the kinetic limit of the microscopic
time-reversed
dynamics is not necessarily described by a Boltzmann type equation.
Here we prove the chain rule  assuming a 
``two-particle factorization'' condition on the collision kernel (see \eqref{eq:cond}).
This condition allows us to associate to the Boltzmann equation 
a time-reversed dynamics which is described 
by a Boltzmann equation too.
From the microscopical point of view, the two-particle factorization condition is related to the jump Markov process for particle pairs,
in particular it is equivalent to the fact that 
the two-particle equilibrium measure factorizes.

}

Starting from the chain rule, we derive the variational formulation
for the homogeneous Boltzmann equation with non-reversible collision kernel, namely 
$$\Ent(P_T|\pi) + \E(Q|\Upsilon_\# \hat Q^P) \le  \Ent(P_0|\pi),$$
which in particular implies the H-theorem.

A variational formulation in more general case is a challenging problem. A further step
in this direction will be the subject of future work (see \cite{BBC}).

The prototype of Boltzmann equation with non-reversible collision kernels
is the kinetic equation for a granular gas.
Other cases of non-reversible collision kernel
have been introduced in effective models
with binary interactions, such as opinion dynamics and market economy models
\cite{toscani2}.
For these models,
the stationary measure is either singular
or unknown, and in both cases, the H-Theorem is not available.
We construct some examples of non-reversible Boltzmann equation
which have a variational formulation and for which the H-Theorem
characterizes the equilibrium.

The paper is organized as follows.
In Section 2 we present variational formulations for
{non-reversible continuous-time Markov chains.
An earlier description of the result is in \cite{magnanti}.}
As a by product we obtain the result for reversible Markov chain obtained in \cite{BBB}.

In Section 3 we consider the non-linear case. We start by presenting the
variational formulation of the homogeneous Boltzmann equation
obtained in \cite{BBBH}, then we consider the case of non-reversible
collision kernels, and we formulate the two-particle factorization
condition.

In Section 4
we consider some examples. In particular, we introduce 
one-dimensional
binary  processes with a concentration/dispersion mechanism (as in
opinion dynamics \cite{toscani}), which may be of interest
for the applications.

\section{Non-reversible jump processes}
\label{sez:markov}
In this section we discuss the entropy dissipation inequality for
reversible and non-reversible continuous time Markov chains.

Assume $\mc X$ be a Polish space, i.e.  a metrizable, complete
and separable topological space.
We denote by
$\mathcal P (\mc X)$ the set of probabilities on
$(\mc X, \mathcal B)$, where $\mathcal B$ is the Borel
$\sigma-$algebra, which we endow with the topology of weak convergence.
We consider a jump process with transition kernel
$r(x,\de y)$, i.e. such that
for any $x\in\mc X$, $r(x,\cdot)$ is a measure on $\mc X$ with finite total
mass,
and for any $B\in\mathcal B$, the map $\mc X \ni x\to r(x,B)\in [0,+\infty)$ is measurable.

We assume that there exists a 
stationary probability measure
$\pi \in \mc P(\mc X)$ for the chain,
i.e.
\begin{equation}
  \label{eq:eqmarkov}
\pi(\de x) \int_{y\in \mc X} r(x,\de y)  =
\int_{y \in \mc X} \pi(\de y) r(y, \de x),
\end{equation}
which is strictly  positive on any open subset of $\mc X$.
We call $\sigma:\mc X \times \mc X \to [0,+\infty)$
the  Borel function such that
\begin{equation}
  \label{eq:sigma}
  r(x,\de y) = \sigma(x,y) \pi (\de y).
\end{equation}
Given $T>0$, $P\in C([0,T],\mc P(\mc X))$
is a weak solution to the Kolmogorov forward equation 
for the Markov
chain with transition kernel $r$ 
 if for any 
 $\phi \in C_0([0,T]\times \mc X)$, with continuous derivative with
 respect to $t$,
 \begin{equation} \label{eq:kolm}
  P_T(\phi_T)-P_0(\phi_0)-\int_0^T \de t
  P_t(\partial_t \phi_t)= \int_0^T \de t \int_{\mc X^2} r(x,\de y)
  P_t(\de x) (\phi_t(y) - \phi_t(x)).
\end{equation}
Let  $\MM$ be is the set of positive finite measures on
$[0,T]\times \mc X \times \mc X$, and 
$\SM$  the subset of $C([0,T], \mc P(\mc X))\times
\MM$ of the pairs $(P,\mc V)$ such that for any $\phi \in
C_0([0,T],\mc X)$, with continuous derivative with respect to $t$,
\begin{equation}
  \label{eq:cont} P_T(\phi_T)-P_0(\phi_0)-\int_0^T \de t
  P_t(\partial_t \phi_t)=\mc V(\overline \nabla^2 \phi)
\end{equation} where $\overline \nabla^2 \phi(x,
y)\coloneqq \phi(x)-\phi(y)$.
Equation \eqref{eq:cont}  is the conservation law of the probability  
$P\in C([0,T], \mc P(\mc X))$
for a process with jumps distributed according to the measure $\mc V\in \MM$,
which we call ``flux''.

\begin{definition}[Measure-flux solutions to the Kolmogorov equation]
  \label{def:markov}

  Fix $T>0$. We say that a measure-flux pair $(P,\mc V)\in \SM$
  is a              solution to the Kolmogorov equation
  if and only if $\mc V = \mc V^P$, where
  $$\mc V^P(\de t,\de x, \de y) \coloneqq \de t \, r(x,\de y) P_t(\de x).$$
\end{definition}
The above definition is justified by the fact that
if $(P,\mc V^P)\in \SM$ then
$P\in C([0,T],\mc P(\mc X))$ solves \eqref{eq:kolm}. {In this case,
  since  $\mc V^P$ is a finite measure
  Eq.s \eqref{eq:kolm} and \eqref{eq:cont} hold also for $\phi \in C_b(\mc X)$.}

Following \cite{BBB}, in which  an inhomogeneous
linear Boltzmann type 
equations is considered, 
we show that, under suitable condition on $r$, 
if $(P,\mc V)\in \SM$ the constitutive equation
$\mc V=\mc V^P$ is equivalent to an entropy dissipation inequality.

Given $\mu,\nu\in \mc P(\mc X)$,
the relative entropy $\Ent (\mu|\nu)$ is  
$$
\begin{aligned}
  \Ent(\mu|\nu)
  &\coloneqq  \sup_{\varphi \in C_b(\mc X)}
    \mu(\varphi) - \nu(\ee^{\varphi} -1 )
    =
    \sup_{\varphi \in C_b(\mc X)} \mu(\varphi) - \log \nu(\ee^{\varphi}) \\
  =&\begin{cases}
     \int \de \mu
      \log \frac {\de \mu}{\de \nu} & \text{ if } \mu \ll \nu \\
      +\infty & \text{ otherwise }
    \end{cases}
\end{aligned}
$$
We also define the relative entropy of two positive
measures $\mc V,\tilde {\mc V} \in \MM$
$$\E(\mc V|\tilde {\mc V}) \coloneqq \sup_{F\in C_b([0,T]\times \mc X \times \mc X)} (\mc V(F)-
\tilde {\mc V}(\ee^F -1))$$
which it turns out to be
$$\E(\mc V|\tilde {\mc V}) =
\begin{cases}
  \int \de {\mc V} 
  \log \frac {\de \mc V}{\de \tilde {\mc V}}
  - \de {\mc V} + \de \tilde {\mc V}
  & \text{ if } \mc V \ll \tilde {\mc V}\\
  +\infty & \text{ otherwise}
\end{cases}
$$
Note that both $\Ent $ and $\E$
are non-negative convex and lower semi-continuous functionals of their
two arguments. Moreover, $\Ent(\mu|\nu) = 0$ if and only if  $\mu=\nu$ and
$\E(\mc V|\tilde {\mc V})  = 0$ if and only if  $\mc V=\tilde {\mc V}$.

Now we state the precise assumption.
 
 \begin{assumption}
   \label{ass:markirr}

   $~$

  \begin{itemize}
  \item[(i)]

    There exists the transition rate of the time-reversed
    process $\hat r$ defined as
    \begin{equation}
      \label{eq:rate-inverso}
      r(x, \de y)   \pi(\de x) = \hat r(y, \de x) \pi (\de y),
    \end{equation}
    i.e. 
      $ \hat r(x, \de y) = \sigma(y,x)   \pi(\de y)$.    
\item[(ii)]
  $\sigma$ is  bounded.
  \end{itemize}
\end{assumption}
Observe that, by item (ii), the notion  of measure-flux solution and of \eqref{eq:kolm} are equivalent.

Denote by  $\hat {\mc V}^P$ the flux of the time-reversed process, namely 
\begin{equation}
  \label{eq:hatQ}
  {\hat {\mc V}}^P(\de t, \de x, \de y)
  \coloneqq  \de t \, \hat r(x,\de y) P_t(\de x) 
\end{equation}

Let $\Upsilon: [0,T] \times \mc X \times \mc X \to
[0,T] \times \mc X \times \mc X$ be the map that 
exchanges the incoming and the outgoing states, namely 
$\Upsilon (t,x,y)= (t,y,x)$.

\begin{proposition}[Entropy balance for non-reversible Markov chains]

  $~$
  
  Under Assumption \ref{ass:markirr}, for each  $(P,\mc V)\in \SM$, 
  \begin{equation}
    \label{eq:EBI}
    \Ent (P_T|\pi) + E(\mc V|\Upsilon_\# {\hat {\mc V}}^P) =
    \Ent (P_0|\pi) + E(\mc V|\mc V^P)
  \end{equation}
  We intend the above equation in the sense that if either side is finite,
then also the other one is finite and equality holds.
\end{proposition}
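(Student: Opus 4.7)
My plan is to adapt the chain-rule argument used for the reversible balance \eqref{eq:ent-bal} (whose proof in \cite[Appendix A]{BBB} the excerpt invokes), the only structural change being that the flux $\mc V^P$ is no longer symmetric under $\Upsilon$; its natural time-reversed partner is now $\Upsilon_\#\hat{\mc V}^P$. As preparation I would rewrite both reference fluxes in a common parametrization. Starting from $\mc V^P(\de t,\de x,\de y)=\de t\,r(x,\de y)f_t(x)\pi(\de x)$, the duality identity \eqref{eq:rate-inverso} turns the formula for $\hat{\mc V}^P$ given in \eqref{eq:hatQ} into
\[
  \Upsilon_\#\hat{\mc V}^P(\de t,\de x,\de y)\;=\;\de t\,r(x,\de y)\,f_t(y)\,\pi(\de x),
\]
so that, exactly as in the reversible case, the Radon--Nikodym derivative takes the clean form $\de\Upsilon_\#\hat{\mc V}^P/\de\mc V^P = f_t(y)/f_t(x)$.

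The heart of the argument is the continuity equation \eqref{eq:cont} tested against $\phi_t=\log f_t$, where $f_t=\de P_t/\de\pi$. The boundary terms yield $\Ent(P_T|\pi)-\Ent(P_0|\pi)$; the time-derivative term $\int_0^T P_t(\partial_t\log f_t)\,\de t=\int_0^T\partial_t P_t(\mc X)\,\de t=0$ vanishes because each $P_t$ is a probability; and the jump term is $\mc V(\overline\nabla^2\log f_t)$, which by the Radon--Nikodym identity of the previous step matches exactly the logarithmic contribution to $\E(\mc V|\mc V^P)-\E(\mc V|\Upsilon_\#\hat{\mc V}^P)$. Expanding the latter via $\E(\mc V|\tilde{\mc V})=\int\de\mc V\log(\de\mc V/\de\tilde{\mc V})-\mc V(1)+\tilde{\mc V}(1)$, the $\log\de\mc V$ pieces cancel and one is left with the integrated $\log(\de\Upsilon_\#\hat{\mc V}^P/\de\mc V^P)$ plus a mass defect $\mc V^P(1)-\Upsilon_\#\hat{\mc V}^P(1)$. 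A short observation closes the identity: invariance of $\pi$ under $r$ makes the $y$-marginal of $r(x,\de y)\pi(\de x)$ equal to $\lambda(y)\pi(\de y)$, while by \eqref{eq:rate-inverso} it also equals $\hat\lambda(y)\pi(\de y)$, so $\lambda=\hat\lambda$ $\pi$-almost everywhere; since $P_t\ll\pi$, this forces $P_t(\lambda)=P_t(\hat\lambda)$ for every $t$ and the two total masses coincide. Assembling the pieces yields \eqref{eq:EBI}.

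The main technical obstacle, inherited from the reversible case, is that $\log f_t$ is not admissible in \eqref{eq:cont} and that the formal chain-rule manipulations require $\mc V(|\log f_t(x)|)$, $\mc V(|\log f_t(y)|)$, $\mc V(|\log\sigma|)$ and their $\hat r$-analogues to be finite. I would handle this by truncating to $(f_t\vee\eps)\wedge M$, applying \eqref{eq:cont} to the resulting bounded test function, and passing to the limit using the variational representations of $\Ent$ and $\E$; the exponential-moment hypotheses in Assumption \ref{ass:markirr}(ii)--(iii), applied to both $\lambda$ and $\hat\lambda$, together with finiteness of the assumed-finite side of \eqref{eq:EBI}, supply the uniform integrability needed to control the error terms. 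The ``one side finite implies the other finite and equal'' clause then follows from lower semi-continuity of $\Ent$ and $\E$ together with the equality at the truncated level.
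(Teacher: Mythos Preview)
Your proposal is correct and follows essentially the same route as the paper's sketch: differentiate $\Ent(P_t|\pi)$ via the continuity equation tested against $\log f_t$, recognize the resulting $\log(f_t(y)/f_t(x))$ as the logarithmic part of $\E(\mc V|\mc V^P)-\E(\mc V|\Upsilon_\#\hat{\mc V}^P)$, and use stationarity to kill the mass defect. Your observation that $\lambda=\hat\lambda$ $\pi$-a.e.\ is a slightly more explicit way of phrasing the paper's integral identity $\int r(x,\de y)\pi(\de x)g(x)=\int r(x,\de y)\pi(\de x)g(y)$, and your truncation outline is more detailed than the paper, which only remarks that the argument is rigorous for finite $\mc X$; otherwise the arguments coincide.
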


The proof is obtained by showing that \eqref{eq:EBI}
holds for a regularization of the pair $(P,\mc V)$,
and then by removing the regularization.
The details are in  Appendix \ref{app:prova}.
Here we only remark that the possibility of writing the entropy balance
$\Ent(P_T|\pi) -\Ent(P_0|\pi)$ as the difference 
$\E(\mc V|\mc V^P) - E(\mc V|\Upsilon_\# {\hat {\mc V}}^P)$
depends on the fact that
if  $P_t\ll \pi$ for all $t$, the total flux
of the process is the same as the total flux of the reversed one:
$$
\mc V_\sigma^P(1) = \hat {\mc V}_\sigma^P(1)  =
\Upsilon_\# \hat {\mc V}_\sigma^P(1),$$
as follows from the equilibrium condition \eqref{eq:eqmarkov}.

\begin{remark}
  Observe that the right-hand side of \eqref{eq:EBI} 
  is the  large deviation rate function
  for the empirical measure and flux
  constructed by taking $N$ independent copies of the chain.
  Moreover, the left-hand side
  is large deviation rate function for the
  time-reversed dynamics.
  In particular the equality coincides with the so called
  Onsager-Machlup relation, see
  \cite{BDGJL}.

\end{remark}

\begin{remark}
  If we set $\mc V=\mc V^P$ in the entropy balance equation \eqref{eq:EBI},
  we get the $H$-theorem for the chain, 
  while, if $\mc V= \Upsilon_\# \hat {\mc V}^P$, we get the
  $H$-theorem for the time-reversed process.
\end{remark}

\begin{proposition}[Variational formulation for non-reversible Markov chain]

   Let  $(P,\mc V)\in\SM$ be such that  $\Ent(P_0|\pi) < +\infty$.
   The following assertions are equivalent.

   \begin{itemize}
   \item[(i)] $P$ solves the Kolmogorov equation \eqref{eq:kolm}.

   \item[(ii)] 
   $\Ent(P_T|\pi) + \E(\mc V|\Upsilon_\# {\hat {\mc V}}^P) \le
   \Ent(P_0|\pi).$
 \end{itemize}
\end{proposition}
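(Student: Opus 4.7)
The plan is to follow the same architecture as in the reversible case, pivoting on the entropy balance identity \eqref{eq:EBI} established just above.

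First I would establish (i) $\Leftrightarrow$ (ii). Since $\Ent(P_0|\pi) < +\infty$, the inequality (ii) combined with \eqref{eq:EBI} rewrites as $\E(\mc V|\mc V^P) \le 0$; by the non-negativity of $\E$ this is equivalent to $\E(\mc V|\mc V^P) = 0$, i.e.\ $\mc V = \mc V^P$, which by Definition \ref{def:markov} says exactly that $P$ solves the Kolmogorov equation. Conversely, if $\mc V = \mc V^P$ then $\E(\mc V|\mc V^P) = 0$ and \eqref{eq:EBI} delivers (ii) with equality. Finiteness is handled by the proviso attached to \eqref{eq:EBI}: when (ii) holds, its left-hand side is finite, hence so is the right-hand side of the entropy balance, legitimising the cancellation.

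The second step is the non-reversible analogue of Lemma \ref{lemma:RD}:
\begin{equation*}
  \E(\mc V|\mc V^P) + \E(\mc V|\Upsilon_\# {\hat {\mc V}}^P) = 2\,\E(\mc V|R^P) + \int_0^T \de t\,\mathrm D(P_t).
\end{equation*}
The key computational observation is that, unwinding \eqref{eq:rate-inverso} and \eqref{eq:hatQ},
\begin{equation*}
  \Upsilon_\# {\hat {\mc V}}^P(\de t,\de x,\de y) = \de t\,\sigma(x,y)\,f_t(y)\,\pi(\de x)\pi(\de y),
\end{equation*}
which carries exactly the same density that $\Upsilon_\# \mc V^P$ did in the reversible case (where symmetry of $\sigma$ was available for free). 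Writing $\mc V = m_t(x,y)\,\de t\,\pi(\de x)\pi(\de y)$ and assembling the two explicit integral expressions for $\E$, the logarithmic terms combine via $\log a + \log b = 2\log\sqrt{ab}$ with $a = \sigma f_t(x)$ and $b = \sigma f_t(y)$, while the algebraic remainder collapses to $\sigma\,(\sqrt{f_t(x)} - \sqrt{f_t(y)})^2$, yielding the claimed identity. As in the reversible setting, the required integrability of $|\log \sigma|$, $|\log f_t(x)|$ and $|\log f_t(y)|$ against $\mc V$ is obtained from the variational representation of $\E$ together with Assumption \ref{ass:markirr}(ii)--(iii) on the exponential moments of $\lambda$ and $\hat\lambda$.

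Finally, (ii) $\Leftrightarrow$ (iii) is immediate: adding $\E(\mc V|\mc V^P)$ to both sides of \eqref{eq:EBI} and applying the identity of Step 2 produces
\begin{equation*}
  \Ent(P_T|\pi) + 2\,\E(\mc V|R^P) + \int_0^T \de t\,\mathrm D(P_t) = \Ent(P_0|\pi) + 2\,\E(\mc V|\mc V^P),
\end{equation*}
so (iii) is equivalent to $\E(\mc V|\mc V^P) \le 0$, which by Step 1 is equivalent to (i). The main obstacle, just as in the reversible proof, is the bookkeeping of finiteness in Step 2 needed to legitimately split $\log(m_t/(\sigma f_t))$ into $\log\sigma$ and $\log f_t$ contributions; everything else is algebraic rearrangement parallel to \cite{BBB}.
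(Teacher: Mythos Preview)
Your proposal is correct and follows essentially the same approach as the paper: the equivalence (i)$\Leftrightarrow$(ii) via the entropy balance \eqref{eq:EBI}, and (ii)$\Leftrightarrow$(iii) via the identity $\E(\mc V|\mc V^P) + \E(\mc V|\Upsilon_\# \hat{\mc V}^P) = 2\E(\mc V|R^P) + \int_0^T \de t\,\mathrm D(P_t)$, which is precisely the non-reversible analogue of Lemma \ref{lemma:RD}. The paper's proof is terser (it simply refers back to the reversible case and to Lemma \ref{lemma:RD}), but your elaboration, including the explicit computation of the density of $\Upsilon_\# \hat{\mc V}^P$ and the bookkeeping of finiteness, matches what the paper intends.
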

\begin{proof}
  By  the entropy balance \eqref{eq:EBI},
  $$\Ent(P_T|\pi) + \E(\mc V|\Upsilon_\# {\hat {\mc V}}^P) \ge
  \Ent(P_0|\pi),$$
  where the equality holds if and only if $\E(\mc V|\mc V^P)=0$, i.e. $\mc V = \mc V^P$.  
\end{proof}
Observe that, by the convexity of the relative entropy, the solution is unique
(see \cite{Gi} and \cite[Theorem 2.6]{BBB}).

\subsection*{Unbounded transition kernels}

Denote by $\lambda:\mc X \to [0,+\infty)$ and
$\hat \lambda:\mc X \to [0,+\infty)$ the jump rates
$$\lambda (x) \coloneqq \int_{\mc X} r(x,\de y) =  \int_{\mc X} \sigma(x,y) \pi(\de y).$$
$$\hat \lambda (x) \coloneqq \int_{\mc X} \hat r(x,\de y) =  \int_{\mc X} \sigma(y,x) \pi(\de y).$$
Under Assumption \ref{ass:markirr}, $\lambda$ and $\hat \lambda$ are  continuous and bounded.
In order to generalize the previous results to unbounded transition kernels,
one needs to verify in particular the condition 
\begin{equation}\label{Pl}\sup_{t\in [0,T]} \int_{\mc X} P_t(\de x) \lambda(x)\end{equation}
that assures the equivalence of \eqref{eq:kolm} and measure-flux solutions.
The validity of the above condition depends on the details of the model.

Here we propose a generalization in the same spirit of \cite{BBB}.
  In particular, we assume 
  \begin{equation}
    \label{fE}\sup_{t\in [0,T]} \Ent(P_t|\pi) < +\infty,
  \end{equation}
  and we require that
  $\lambda$ and $\hat \lambda$ have all exponential moments with respect
  to $\pi$,
  namely
  \begin{equation}\label{exp_m}
    \pi(\ee^{\gamma \lambda}) <+\infty, \qquad \pi(\ee^{\gamma \hat \lambda}) <+\infty, \quad \forall \gamma>0.
  \end{equation}
Observe that under the above assumptions, by basic entropy inequality \eqref{Pl} is satisfied.
Adapting the argument in \cite[Theorem 2.6]{BBB}, one can prove the following
statement.

\begin{proposition}
  \label{prop:3}
  Assume that $\lambda$ and $\hat \lambda$ have all exponential moments with respect to $\pi$.
  Consider  $(P,\mc V)\in \SM$ such that  
  $\sup_{t\in [0,T]} \Ent(P_t|\pi) < +\infty$.
  Then the following assertions are equivalent.
  
  \begin{itemize}
  \item[(i)] $P$ solves the Kolmogorov equation \eqref{eq:kolm}.
  \item[(ii)] 
    $\Ent(P_T|\pi) +
    \E(\mc V|{{\mc V}}^P) +
    \E(\mc V|\Upsilon_\# {\hat {\mc V}}^P) \le
    \Ent(P_0|\pi).$
  \end{itemize}
\end{proposition}
The key point is that the reverse inequality in item (ii) holds
for any $(P,\mc V)\in \SM$. Indeed, if the two functional
$\E(\mc V|{{\mc V}}^P)$ and $\E(\mc V|\Upsilon_\# {\hat {\mc V}}^P)$
are finite,  \eqref{fE} and \eqref{exp_m} imply the chain rule \eqref{eq:EBI}.

\vskip.3cm
In the reversible case, i.e. when the detailed balance condition holds,
$\sigma(x,y) = \sigma(y,x)$, $r=\hat r$, $\hat {\mc V}^P = \mc V^P$.
In this case, inequality in item (ii) in Proposition \ref{prop:3}
can be
rewritten in terms of the Dirichlet form of the square root of $f$
and a kinematic term (see \cite[Equation 2.23]{BBB}).
More precisely,
for  $P\in C([0,T]; \mc P(\mc X))$, with $P_t\ll \pi$ for any $t\in [0,T]$,
define the functional $ \mathrm D$ 
\begin{equation}
  \label{def:D}
   \mathrm D(P_t) = \int \sigma(x,y)\pi(\de x)\pi(\de y)
   (\sqrt{ f_t(x)} -   \sqrt{ f_t(y)})^2,
 \end{equation}
 where 
 $f_t =\frac {\de P_t}{\de \pi}$,
 and the flux  $R^P$
 \begin{equation}
   \label{def:RP}
   \de R^P \coloneqq \de t\, r(x,\de y)\pi(\de x) \sqrt{f_t(x) f_t(y)}
   =  \de t \,\sigma(x,y)  \pi(\de x) \pi(\de y) \sqrt{f_t(x) f_t(y)}.
 \end{equation}

 The following equality holds
 $$\E(\mc V|\mc V^P) + \E(\mc V|\Upsilon_\# \mc V^P)  = 2\E(\mc V|R^P) +
 \int_0^T \de t\,
 \mathrm D (P_t).$$
 Note that this decomposition holds also
for  non-reversible chains, and in this case
$\mathrm{D}$ depends  only on the symmetric part of $\sigma$.

\section{Binary collision models}
\label{sez:binary}

In the usual notation for the Boltzmann equation,
a binary collision model is characterized by a
transition kernel $B(v,v_*,\de v',\de v'_*)$,
where  $v,v_*\in \mc X$ are the incoming velocities (or states)
and  $v',v'_*\in \mc X$ are the outgoing ones.

Fixed $T>0$, the one particle probability measure
$P\in C([0,T],\mc X)$ satisfies the homogeneous Boltzmann equation,
whose weak form 
reads
\begin{equation}\begin{aligned}
  \label{eq:kac}
  P_T(\phi_T)&-P_0(\phi_0)-\int_0^T \de t P_t(\partial_t \phi_t)
  \\&=
  \frac 12
  \int_0^T \de t \int_{\mc X^4} B(v,v_*,\de v',\de v'_*) P_t(\de v)P_t(\de v_*)
  (\overline \nabla^4 \phi_t)(v,v_*,v',v'_*),
  \end{aligned}
\end{equation}
where $\phi$ is any function in 
$C_0([0,T],\mc X)$
with continuous derivative w.r.t. to $t\in [0,T]$,
and  
$\overline \nabla^4 \phi (v,v_*,v',v'_*)
\coloneqq \phi(v') + \phi(v'_*)
-\phi(v)-\phi(v_*)$.
Without loss of generality, we can assume
$$B(v,v_*,\de v',\de v'_*) = B(v,v_*,\de v'_*,\de v') = B(v_*,v,\de v',\de v'_*).$$

We rewrite this equation in
terms of a measure-flux pair.
We denote by $\MB$  the set of positive finite measures 
on $[0,T]\times \mc X^2\times \mc X^2$
with the symmetry
$Q(\de t; \de v, \de v_*, \de v', \de v_*' ) =
Q(\de t; \de v, \de v_* , \de v_*' , \de v_* ) =
Q(\de t; \de v_*, \de v , \de v' , \de v_*' )$.
Let 
$\SB$ be  the subset of $C([0,T], \mc P(\mc X))\times
\MB$ of the pairs $(P,Q)$ such that for any $\phi \in
C_0([0,T],\mc X)$, with continuous derivative with respect to $t$,
\begin{equation}
  \label{eq:Bcont}
  P_T(\phi_T)-P_0(\phi_0)-\int_0^T \de t P_t(\partial_t \phi_t)=Q(\overline
  \nabla^4 \phi).
\end{equation}

  \begin{definition}[Measure-flux solutions to the
    homogeneous Boltzmann equation]
    \label{def:mfboltz}

    We say that a measure-flux pair $(P,Q)\in \SB$
    is a  solution to the homogeneous Boltzmann equation
    if and only if $Q= Q^{P\otimes P}$, where
    $$Q^{P\otimes P}(\de t,\de v, \de v_*,\de v',\de v'_*)
    \coloneqq \de t \, \frac 12 {B(\de v, \de v_*,\de v',\de v'_*)} 
    P_t(\de v) P_t(\de v_*).$$
  \end{definition}
The above definition is justified by the fact that
  if  $(P,Q^{P\otimes P})\in \SB$
  then
  $P\in C([0,T],\mc P(\mc X))$ solves \eqref{eq:kolm}.

\subsection{Hard spheres}

We first recall the variational formulation for the homogeneous Boltzmann
equation for the  hard-spheres model, 
stated in \cite{BBO}, in which
we prove the equivalence of the weak  homogeneous
Boltzmann equation for $P$ and an entropy inequality for $(P,Q)$.
Here $\mc X=\R^d$ and 
the collision kernel is 
\begin{equation}
  \label{eq:BHS}
B(v,v_*,\de v',\de v'_*)= \frac 12 \int_{S^n} \de n
\, |(v-v_*)\cdot n| \, \delta_{v-((v-v_*)\cdot n)n }(\de v')
\, \delta_{v_*+((v-v_*)\cdot n)n }(\de v'_*)
\end{equation}
where $S^d=\{n\in \R^d: |n| = 1\}$.

Fix $e>0$, 
and define $\mc P_e(\R^d)$ as the set of the probability measure $P$
with $P(v^2/2) \le e$ and $P(v) = 0$.
Denoting by $M_e$ the Maxwellian  of energy $e$ and momentum $0$, 
consider the functional
$$\HH_e(P|M_e) =
\begin{cases}
  \int \de P \log \frac {\de P}{\de v} +
  \frac d2 \left(\log \frac {4\pi e}d + 1 \right) & \text{if } P(v^2/2)\le e\\
  +\infty & \text{otherwise.}
\end{cases}
$$
This is the large deviation  rate function of the empirical
  measure 
  of $N$ velocities
$v_1\dots v_N$
distributed according to  the Haar measure on the surface of $\R^{Nd}$
with fixed momentum and energy, namely 
$\frac 1N \sum_{i=1}^N v_i = 0$, and  $\frac 1N \sum_{i=1}^N v_i^2/2 = e$
(see \cite{KR, BBBC}).

Let $\SB_e$  be  the subset of the pairs $(P,Q)\in \SB$
with $P\in C([0,T],\mc P_e(\R^d))$. {We remark that
  if $(P,Q)\in \SB$ then $Q^{P\otimes P}$ is a finite measure,
  and then measure-flux solutions and weak solutions in the sense
  of \eqref{eq:kac} are the same.}
Denote
by $\Upsilon: [0,T] \times \mc X^2 \times \mc X^2 \to
 [0,T] \times \mc X^2 \times \mc X^2$ the map which
exchanges the incoming and the outgoing velocities:
$\Upsilon (t,v,v_*,v',v'_*)= (t,v',v'_*,v,v_*)$.
We state the following proposition.

\begin{proposition}[Entropy balance for the hard-sphere model]

  \label{lemma:rev}

  $~$

For each $(P,Q)\in \SB_e$
\begin{equation}
  \label{eq:rever}
  \HH_e(P_T|M_e)
  + \E_{}(Q|\Upsilon_\# Q^{P\otimes P}) =
  \HH_e(P_0|M_e)
  +  \E_{}(Q|Q^{P\otimes P})
\end{equation}
\end{proposition}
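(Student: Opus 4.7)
The plan is to adapt to the Boltzmann setting the chain-rule argument underlying the identities \eqref{eq:ent-bal} and \eqref{eq:EBI}. As in those proofs, I would first reduce to the case where one side of \eqref{eq:rever} is finite; without loss of generality, assume the right-hand side is. Then $\HH_e(P_0|M_e)<+\infty$ and $Q\ll Q^{P\otimes P}$, and a regularisation in the spirit of \cite[Appendix A]{BBB} allows us to assume that $P_t$ has a Lebesgue density $g_t\coloneqq \de P_t/\de v$ for every $t\in[0,T]$.

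The first key computation is the Radon--Nikodym derivative of $\Upsilon_\# Q^{P\otimes P}$ with respect to $Q^{P\otimes P}$. Micro-reversibility of the hard-sphere kernel \eqref{eq:BHS}---the collision map is an involution for each fixed $n\in S^{d-1}$ and preserves $|(v-v_*)\cdot n|$---implies that the measure $B(v,v_*,\de v',\de v'_*)\,\de v\,\de v_*$ on $(\R^d)^4$ is invariant under $\Upsilon$. Hence
$$\frac{\de\,\Upsilon_\# Q^{P\otimes P}}{\de Q^{P\otimes P}}(t,v,v_*,v',v'_*)=\frac{g_t(v')\,g_t(v'_*)}{g_t(v)\,g_t(v_*)},$$
and $\Upsilon_\# Q^{P\otimes P}(1)=Q^{P\otimes P}(1)$ (since $\Upsilon$ is bijective). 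The integral formula for $\E$ then yields
$$\E(Q|Q^{P\otimes P})-\E(Q|\Upsilon_\# Q^{P\otimes P}) \,=\, \int\log\frac{g_t(v')g_t(v'_*)}{g_t(v)g_t(v_*)}\,\de Q.$$

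Next, I would test the continuity equation \eqref{eq:Bcont} against $\phi_t=\log g_t$. The crucial cancellation is $P_t(\partial_t\log g_t)=\int\partial_t g_t\,\de v=0$, since $P_t$ is a probability measure at every time. This gives
$$\int g_T\log g_T\,\de v-\int g_0\log g_0\,\de v \,=\, Q(\overline\nabla^4\log g) \,=\, \int\log\frac{g_t(v')g_t(v'_*)}{g_t(v)g_t(v_*)}\,\de Q.$$
Since on $\mc P_e(\R^d)$ the functional $\HH_e(\cdot|M_e)$ coincides with $\int g\log g\,\de v$ up to an additive constant depending only on $e$ and $d$, the constants cancel between $t=0$ and $t=T$; combining the two displayed identities yields \eqref{eq:rever}.

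The main technical obstacle is using $\log g_t$---unbounded and possibly discontinuous---as a test function in \eqref{eq:Bcont}. I would handle this as in \cite[Appendix A]{BBB}, by truncating to $\log\max(\e,\min(g_t,\e^{-1}))$, mollifying in time, and passing to the limit. The limit is legitimate provided each of $\int|\log g_t(v)|\,\de Q$, $\int|\log g_t(v_*)|\,\de Q$, $\int|\log g_t(v')|\,\de Q$, $\int|\log g_t(v'_*)|\,\de Q$ is finite; this is obtained from the variational representation of $\E$, the finiteness of $\HH_e(P_0|M_e)$ and $\E(Q|Q^{P\otimes P})$, and the uniform energy bound $P_t(v^2/2)\le e$ built into $\SB_e$. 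The opposite direction (LHS finite implies RHS finite) follows from the same argument, using that $\Upsilon$ is an involution so that the roles of $Q^{P\otimes P}$ and $\Upsilon_\# Q^{P\otimes P}$ can be interchanged.
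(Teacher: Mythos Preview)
The paper does not prove this proposition itself but defers to \cite[Proposition 3.1]{BBBH}. Your argument is correct and is precisely the chain-rule strategy used for the analogous identities \eqref{eq:ent-bal} and \eqref{eq:EBI}, specialised to the hard-sphere kernel via its micro-reversibility (invariance of $B(v,v_*,\de v',\de v'_*)\,\de v\,\de v_*$ under $\Upsilon$); this is also the approach of the cited reference.
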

The proof is in  {\cite[Proposition 3.1]{BBBH}}.

\begin{proposition}[Variational solution to the  homogeneous Boltzmann equation]
  A pair 
  $(P,Q) \in \SB_e$ with $P_0(v^2/2)=e$ and
  $\HH_e(P_0|M_e) < +\infty$ 
  is a measure-flux   solution to the homogeneous Boltzmann equation 
  if and only if  
  $$\HH_e(P_T|M_e)
  + \E(Q,\Upsilon_\# Q^{P\otimes P}) \le 
  \HH_e(P_0|M_e),$$
  or, equivalently,
  $$\HH_e(P_T|M_e)
  + \E(Q|Q^{P\otimes P}) 
  + \E(Q|\Upsilon_\# Q^{P\otimes P}) \le 
  \HH_e(P_0|M_e).$$
\end{proposition}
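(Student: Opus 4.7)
The plan is to mimic the argument of Proposition \ref{prop:markov}: invoke the entropy balance \eqref{eq:rever} of Proposition \ref{lemma:rev} and exploit the non-negativity of $\E$ to reduce each inequality to the algebraic condition $Q=Q^{P\otimes P}$, which by Definition \ref{def:mfboltz} is exactly the measure-flux solution property. The hard analytic work (establishing the balance and controlling the various integrals) is already done in \cite{BBBH}, so all that remains is simple algebra.

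First I would handle the easy direction. If $Q=Q^{P\otimes P}$, then $\E(Q|Q^{P\otimes P})=0$ and the balance \eqref{eq:rever} immediately yields
$$\HH_e(P_T|M_e)+\E(Q|\Upsilon_\# Q^{P\otimes P})=\HH_e(P_0|M_e),$$
so both stated inequalities hold, in fact with equality. This equality is precisely the $H$-theorem for the hard-sphere model.

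For the converse, suppose $(P,Q)\in\SB_e$ satisfies
$$\HH_e(P_T|M_e)+\E(Q|\Upsilon_\# Q^{P\otimes P})\le \HH_e(P_0|M_e)<+\infty.$$
Then the left-hand side of \eqref{eq:rever} is finite, hence by the convention stated after Proposition \ref{lemma:rev} the right-hand side is also finite and the two sides agree. Subtracting the (finite) quantity $\HH_e(P_0|M_e)$ from both sides gives
$$\E(Q|Q^{P\otimes P})=\HH_e(P_T|M_e)+\E(Q|\Upsilon_\# Q^{P\otimes P})-\HH_e(P_0|M_e)\le 0.$$
Since $\E\ge 0$, this forces $\E(Q|Q^{P\otimes P})=0$ and hence $Q=Q^{P\otimes P}$, which is what we needed.

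For the equivalent second formulation, the trick is to add $\E(Q|Q^{P\otimes P})$ to both sides of \eqref{eq:rever}, obtaining
$$\HH_e(P_T|M_e)+\E(Q|Q^{P\otimes P})+\E(Q|\Upsilon_\# Q^{P\otimes P})=\HH_e(P_0|M_e)+2\E(Q|Q^{P\otimes P}).$$
The same non-negativity argument then shows the stated inequality is equivalent to $\E(Q|Q^{P\otimes P})=0$, i.e. $Q=Q^{P\otimes P}$. The only subtlety worth a second look is that the balance is only asserted when one (hence both) sides are finite; this is exactly ensured by combining the hypothesis $\HH_e(P_0|M_e)<+\infty$ with whichever entropy inequality we are assuming, so no further care is needed.
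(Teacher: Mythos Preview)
Your proof is correct and matches the paper's approach: the paper simply writes ``The proof follows from Proposition~\ref{lemma:rev}'', and you have supplied exactly the routine non-negativity argument (mirroring the proof of Proposition~\ref{prop:markov}) that this sentence is pointing to. One small remark: the finiteness convention you invoke is not restated immediately after Proposition~\ref{lemma:rev} in this paper, but it is the same convention used throughout (and is implicit in the cited result from \cite{BBBH}), so your appeal to it is legitimate.
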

The proof follows from Proposition \ref{lemma:rev}.

We remark that also in this case the sum  
$\E(Q|Q^{P\otimes P}) 
+ \E(Q|\Upsilon_\# Q^{P\otimes P})$
can be written in terms of
a Dirichlet form and $\E(Q|R^{P\otimes P})$
of a suitable measure $R^{P\otimes P}$, see \cite{BBBO}.

\subsection{Non-reversible collision kernels}

We now extend this formulation to the case of non-reversible microscopic
dynamics.

An equilibrium solution to the  homogeneous Boltzmann equation
$\pi\in \mc P(\mc X)$ satisfies 
\begin{equation}
  \label{eq:bolteq}
  \pi(v) \int_{v_*,v',v'_*\in \mc X}  B(v,v_*,\de v',\de v'_*)
  \pi(\de v_*)  = 
  \int_{v_*,v',v'_*\in \mc X}  B(v',v'_*,\de v,\de v_*)
  \pi(\de v')\pi(\de v'_*).
\end{equation}
We are going to assume a stronger condition, namely 
\begin{equation}
  \label{eq:cond}
  \pi(\de v) \pi(\de v_*) \int_{v',v'_*,\in \mc X}  B(v,v_*,\de v',\de v'_*)
    = \int_{v',v'_*\in \mc X}  B(v',v'_*,\de v,\de v_*)
  \pi(\de v')\pi(\de v'_*).
\end{equation}
Note that the above  equality  implies \eqref{eq:bolteq}. On the other hand
it is weaker than the detailed balance type condition
$$B(v',v'_*,\de v,\de v_*) \pi (\de v') \pi(\de v'_*) = 
B(v,v_*,\de v',\de v'_*) \pi (\de v) \pi(\de v_*),$$
which holds for collision kernels invariant in the exchange of
incoming and outgoing velocities.

We refer to \eqref{eq:cond} as {\bf two-particle factorization} condition.
In fact, consider the 
continuous time Markov chain on $\mc X^2$ with transition kernel $B$.
An equilibrium probability measure  $\alpha^{(2)}\in \mc P ( \mc X^2)$
for this process 
verifies
$$\alpha^{(2)}(\de v,\de v_*) \int_{v',v'_* \in \mc X}  B(v,v_*,\de v',\de v'_*)
= \int_{v',v'_*\in \mc X}  B(v',v'_*,\de v,\de v_*)
\alpha^{(2)}(\de v',\de v'_*).
$$
Condition \eqref{eq:cond} says that $\alpha^{(2)} = \pi\otimes \pi$
is a factorized stationary
probability
measure for the dynamics.

In order to simplify the proofs, we only consider  bounded transition kernels,
in the sense specified below.

\begin{assumption}
   \label{ass:B}

   $~$

 \item[(i)] For any $(v,v_*,v',v'_*)\in \mc X^4$
   $$B(v,v_*,\de v',\de v'_*) = B(v_*,v,\de v',\de v'_*) =
   B(v,v_*,\de v'_*,\de v').$$
   
 \item[(ii)] There exists a probability measure $\pi$ which satisfies
   \eqref{eq:cond} and gives positive measure to any
   open subset of $\mc X$.
   
 \item[(iii)] There exists a bounded Borel function $b:\mc X^4\to [0,+\infty)$
   such that
   $$B(v,v_*,\de v',\de v_*')  = b(v,v_*,v',v'_*) \pi (\de v') \pi(\de v'_*).$$ 
   
 \end{assumption}
 \noindent 
 Observe that example $K_1$, $O_1$, $O_2$ in Section \ref{sez:esempi}
 satisfy Assumption \ref{ass:B}.

 Recall {the typical flux}
 $$Q^{P\otimes P} = \de t \, \frac 12 B(v,v_*,\de v',\de v'_*) P_t(\de v)
 P_t(\de v_*),
 $$
 and set $q^{P\otimes P}_t$ its time density, namely
 $q^{P\otimes P}_t=\frac 12 B(v,v_*,\de v',\de v'_*) P_t(v) P_t(v_*)$.
We denote by $\hat B$ the collision rate of the Boltzmann
time-reversed dynamics
\begin{equation}
\label{def:hatB}
\hat B(v',v'_*,\de v,\de v_*) \pi (\de v') \pi(\de v'_*) = 
B(v,v_*,\de v',\de v'_*) \pi (\de v) \pi(\de v_*).
\end{equation}
i.e.
$$ \hat B(v,v_*,\de v,\de v_*) = b(v',v'_*,v,v_*) \pi(\de v)
\pi(\de v_*).
$$
We set
\begin{equation}
  \label{def:hatQPP}
  \hat Q^{P\otimes P} = \de t \, \frac 12 \hat B(v,v_*,v',v'_*) P_t(v) P_t(v_*),
\end{equation}
and we denote by $\hat q^{P\otimes P}_t$ its time density.

\begin{proposition}
 Under Assumption \ref{ass:B}, 
 for any $(P,Q)\in \SB$
  \begin{equation}
    \label{eq:cr}
    \Ent(P_T|\pi) + \E(Q|\Upsilon_\# \hat Q^{P\otimes P})
    =  \Ent(P_0|\pi) + 
    \E(Q|Q^{P\otimes P}). 
  \end{equation}
  We intend the above equation in the sense that 
  if one of the two sides of the equality is finite, the other if finite
  and they are equal.
\end{proposition}
\begin{proof}
  The proof is very similar to that of the linear case,
  stated in Proposition \ref{eq:EBI} and proved in Appendix \ref{app:prova}.
  Here we only highlight the main points.

  We assume that the right-hand side of \eqref{eq:cr} is finite. This implies
  that $P_t(\de v) = f(t,v)\pi(\de v)$ for some summable density $f$.
  We regularize the measure-flux pair {in velocity and time, using as
    kernel $g_\eps$ and $\imath_\eta$, respectively.
  The regularized pair $(P^{\eps,\eta},Q^{\eps,\eta})$ is in $\SB$
  and $P_0^{\eps,\eta} = P_0^{\eps}$ for any $\eps,\eta>0$.}
  Moreover, $\de P^{\eps \eta} = f^{\eps,\eta}\de \pi$
  and $\de Q^{\eps,\eta}(\de t,\de v,\de v_*,\de v',\de v'_*) =
  q^{\eps,\eta} \de t \, \pi^{\otimes 4}$, where $f^{\eps,\eta}$ and
  $q^{\eps,\eta}$ are positive, {bounded, {smooth} and  bounded away from zero.}
 In particular, $f^{\eps,\eta}$ is differentiable in time.

  By the balance equation
  \begin{equation*}
  \opde{t} \Ent (P_t^{\eps,\eta} | \pi)
  = \int \de \pi^{\otimes 4} q(t,v,v_*,v',v'_*)
  \log \frac {f_t(v') f_t(v'_*)}
  {f_t(v) f_t(v_*)}
\end{equation*}
Fix $\gamma>0$. We rewrite the above expression as 
\begin{equation}
  \label{eq:bilent}
  \begin{split}
    &\opde{t} \Ent (P^{\eps,\eta}_t | \pi)\\
    &=
        \int \de \pi^{\otimes 4} 
        q^{\eps,\eta}(t,v,v_*,v',v'_*)
        \log \frac {2q^{\eps,\eta}(t,v,v_*,v',v'_*)}
      {b_\gamma(v,v_*,v',v'_*)f^{\eps,\eta}(t,v) f^{\eps,\eta}(t,v_*)}
       \\
     &-  \int \de \pi^{\otimes 4}
       q^{\eps,\eta}(t,v,v_*,v',v'_*)
       \log \frac {2 q^{\eps,\eta}(t,v,v_*,v',v'_*)}
      {b_\gamma(v,v_*,v',v'_*) f^{\eps,\eta}(t,v') f^{\eps,\eta}(t,v_*')},
    \end{split}
  \end{equation}
  where $b_\gamma = b + \gamma$.
  Moreover, by condition \eqref{eq:cond}
  \begin{equation*}
    \int\! \de \pi^{\otimes 4}b_\gamma(v,v_*,v',v'_*)  f^{\eps,\eta}(t,v)
    f^{\eps,\eta}(t,v_*)
    =
    \int\! \de \pi^{\otimes 4}b_\gamma(v,v_*,v',v'_*)  f^{\eps,\eta}(t,v')
    f^{\eps,\eta}(t,v_*').
  \end{equation*}
  By adding and subtracting these two terms in the right-hand side of
  \eqref{eq:bilent}, and 
  by integrating
  in time,
  we obtain the chain rule for the entropy for the regularized 
  measure-flux pair
  \begin{equation}\label{11}
    \Ent (P^{\eps,\eta}_T  | \pi) +
  \E(Q^{\eps,\eta}|\Upsilon_\#
  {\hat Q_\gamma}^{P^{\eps,\eta}\otimes P^{\eps,\eta}}) =
  \Ent (P^{\eps,\eta}_0  | \pi) +
  \E(Q^{\eps,\eta}|Q_\gamma^{P^{\eps,\eta}\otimes P^{\eps,\eta}})
  \end{equation}
  where {$Q_\gamma^P$ and $\hat Q_\gamma^P$ are the typical fluxes with $b$ replaced by $b+\gamma$.}

  As $\eps \to 0$, $\eta\to 0$, $\gamma \to 0$,
  the right-hand side of \eqref{11} converges to 
  $$\Ent (P_0  | \pi) + \E(Q|Q^{P\otimes P}).$$
  {The proof follows the same argument of Appendix \ref{app:prova},
    which uses the  convexity and lower semi-continuity of the entropy.
  A main difference is in the removal of the regularization in time,
  for which the convexity argument cannot be directly applied,
  since $f^{\eps,\eta}(t,v) f^{\eps,\eta}(t,v_*)$ is not the
  time regularization of $f^{\eps}(t,v) f^{\eps}(t,v_*)$.
  This difficulty is circumvented as in the proof of
  \cite[Theorem 5.6]{BBB}, Step 2. We omit the details.}

  By lower semi-continuity,
  we obtain 
  $$\Ent (P_T  | \pi) +
  \E(Q|\Upsilon_\#
  {\hat Q}^{P\otimes P}) \le 
  \Ent (P_0  | \pi) +
  \E(Q|Q^{P\otimes P}).
  $$
  To conclude the proof, we show that the reverse inequality holds,
  by repeating the argument for the reversed trajectory. In this case we
  use the time regularization
that preserves $P_T$ instead of $P_0$.
\end{proof}

Now we can state the following Proposition, under Assumption \ref{ass:B}.

\begin{proposition}
  [Variational solution to the  homogeneous Boltzmann equation]
  \label{def:varsol-BC}
  Let $(P,Q)\in \SB$ be such that $\Ent(P_0 |\pi) < +\infty$.
  Then $(P,Q)$ is a measure-flux
  solution to the  homogeneous Boltzmann
  equation if and only if
  \begin{equation}\label{eq:Bfirst}
    \Ent(P_T|\pi) + \E(Q|\Upsilon_\# \hat Q^{P\otimes P}) 
    \le \Ent(P_0|\pi),
  \end{equation}
  or, equivalently,
  \begin{equation}
    \label{eq:Bsecond}
    \Ent(P_T|\pi) + \E(Q|\Upsilon_\# \hat Q^{P\otimes P}) +
    \E(Q|Q^{P\otimes P}) \le \Ent(P_0|\pi). 
  \end{equation}
  Moreover, 
\begin{equation}\label{eq:Bdecom}
  \E(Q|\Upsilon_\# \hat Q^{P\otimes P}) +  \E(Q|Q^{P\otimes P})
  = 2 E(Q|R^{P\otimes P}) + \int_0^T \de t  \mathrm D^2(P_t)
  \end{equation}
  where, in terms of $f_t = \frac {\de P_t}{\de \pi}$,   
  \begin{equation*}\de R^{P\otimes P} \coloneqq 
  \de t \, \frac 12 {B(v,v_*,\de v',\de v'_*)}\pi(\de v) \pi(\de v_*)
  \sqrt{f_t(v) f_t(v_*) f_t(v') f_t(v'_*)}
  \end{equation*}
  and the functional $ \mathrm D^2(P)$ is
  $$\mathrm D^2(P) \coloneqq \int 
  \frac 12 {B(v,v_*,\de v',\de v'_*)} \pi(\de v) \pi(\de v_*)
  (\sqrt{f(v) f(v_*)} - \sqrt {f(v') f(v'_*)})^2
  $$
\end{proposition}

\begin{proof}
  The two variational inequalities \eqref{eq:Bfirst}, \eqref{eq:Bsecond} 
  follow from Proposition \ref{eq:cr}. The equality \eqref{eq:Bdecom} easily follows from direct computation.
\end{proof}

\subsection{Microscopic interpretation}
\label{sez:bbgky}

The
underlying microscopic dynamics of the homogeneous
Boltzmann equation is a Kac-type walk, i.e. the
continuous time Markov chain on $\mc X^N$ with
 generator 
$$\mc L_N = \frac 1N \sum_{i<j} L_{ij}.$$
Here $L_{ij}$ acts of $F:\mc X^N\to \R$ as 
\begin{equation*}\begin{split}
    L_{ij} F(v_1, \dots v_N) = \int_{(v_i',v_j')\in \mc X^2}
B(v_i,v_j,\de v_i',\de v_j')
\big[F(v_1 \dots v_i \dots v_j \dots v_N)\\
  -
F(v_1 \dots v_i' \dots v_j' \dots v_N)\big].
\end{split}\end{equation*}
An invariant  measure $\alpha^N \in \mc P(\mc X^N)$
satisfies $\mathcal L_N^* \alpha^N = 0$, i.e.  
\begin{equation*}\begin{split}
\alpha^N(\de v_1,\dots \de v_N) &\sum_{i<j} \int_{(v',v'_*)\in \mc X^2}
   { B(v_i,v_j,\de v',\de v'_*)}
    \\&- \sum_{i<j} \int_{(v',v'_*)\in \mc X^2}  B(v',v'_*,\de v_i,\de v_j) 
\alpha^N(\de v_1 \dots \de v' \dots \de v'_* \dots \de v_N) = 0.
\end{split}\end{equation*}
We now show that if $\alpha^N = \nu_N^{\otimes N}$ for some $\nu_N\in \mc P(\mc X)$,
then condition \eqref{eq:cond} is fulfilled.
Since $\alpha^N$ is necessarily exchangeable,
indicating by $\alpha^{N,(k)}$ its
$k$-particles marginal, 
by integrating in $\de v_{k+1}, \dots \de v_N$
we get the following BBJKY hierarchy
of equilibrium equations:
$$
\sum_{i<j\le k } L^*_{ij} \alpha^{N,(k)} + (N-k)C^{k,k+1} \alpha^{N,(k+1)} = 0,
$$
where
\begin{equation}
  \label{eq:ckkp}
\begin{split}
  C^{k,k+1} \alpha^{N,(k+1)}(\de v_1,\dots &\de v_k)
  \\=\sum_{i=1}^k
  \int_{(v_{k+1},v_i',v_{k+1}')\in \mc X^2}
    &\left( B(v_i,v_{k+1},\de v_i',\de v_{k+1}')
      \alpha^{N,(k+1)}(\de v_1,\dots \de v_i, \dots \de v_{k+1}) \right.
  \\
  &\left.-B(v_i',v_{k+1}',\de v_i,\de v_{k+1}) \alpha^{N,(k+1)}
                              (\de v_1,\dots,\de v_i',
  \dots \de v_{k+1}')
\right).
\end{split}
\end{equation}
If $\alpha^N = \nu_N^{\otimes N}$ for some $\nu_N\in \mc P(\mc X)$,
then $C^{1,2} \nu_N^{\otimes 2}=0$, i.e. $\nu_N$ is an equilibrium for
homogeneous Boltzmann equation \eqref{eq:bolteq} which we can indicate
by $\pi$.  As a consequence, $C^{k,k+1} \pi^{\otimes (k+1)} = 0$ for
any $k\le N-1$.  Then, by \eqref{eq:ckkp},
$\sum_{i\le j \le k} L_{ij} \pi^{\otimes k} = 0$. In particular, for
$k=2$, we get the condition \eqref{eq:cond}.

It is easy to check that 
if $\alpha^N = \pi^{\otimes N}$,
the collision kernel of the time-reversed  microscopic process is
the one defined in \eqref{def:hatB}.
Moreover,  by adapting the argument in \cite{BBBO}, 
one can prove a large deviation principle for measure-flux
pairs $(P,Q)$ of both the microscopic process and its time-reversed, 
with dynamical rate functions given by
$\E(Q|Q^{P\otimes P})$ and
$\E(Q|\hat Q^{P\otimes P})$, respectively.

\begin{remark}
  For a general transition kernel $B$, the invariant measure
  $\alpha^N$ is not factorized. Nevertheless, in some case, for instance
  for the Kac's walk for hard spheres,
  where $\alpha^N$ is the Haar measure on
  the surface of fixed specific energy and momentum,
  condition \eqref{eq:cond} is verified.
\end{remark}

\begin{remark}
  In \cite{BBC}
  we show that the convergence of 
  $\alpha^{N,(k)}\to \pi^{\otimes k}$ for any $k$,
  i.e. the ``chaoticity'' of the equilibrium measure,
  is not sufficient 
  to obtain  \eqref{eq:cond}.
  Moreover, we show some example
  in which a variational formulation holds
  in terms of the rate function of the LDP 
  for the empirical measure associated to
  the variables $v_1\dots v_N$, distributed
  with $\alpha^N$. This functional can be different from $\Ent(P_0|\pi)$,
  and the corresponding reverse dynamics can be a non-quadratic Boltzmann
  type equation.
  
  Such possibility does not exclude that also the entropy is a decreasing
  function. The two-particle factorization is a sufficient
  condition for this property.

\end{remark}

\subsection{How to construct collision kernels}

In order to construct collision kernels
of homogeneous Boltzmann equations which satisfy the two-particle factorization
condition, we consider a Markov jump process
on $\mc X^2$.
Let $P(\de v',\de v'_*|v,v_*)$ e 
the  transition probability from $v,v_*$ to $v',v'_*$.
Under suitable assumption, there exists a unique invariant
measure $g(\de v,\de v_*)$, namely 
  \begin{equation}
    \label{eq:g}
    g(\de v,\de v_*) = \int_{\mc X^2} P(\de v,\de v_*|v',v'_*)g(\de v',\de v'_*).
\end{equation}
Choose a probability measure $\pi$ such that $g(\de v, \de v_*)\ll
\pi(\de v) \pi(\de v_*)$, fix $c>0$,  and denote by  $\lambda$  the density of
$cg$  with respect to $\pi\otimes \pi$, namely
$cg(\de v,\de v_*)= \lambda(v,v_*) \pi(\de v)\pi(\de v_*)$.

\begin{proposition}

  \label{prop:strategia}

  $~$
  
  Let $B$ be the collision kernel
  $$B(v,v_*,\de v', \de v'_*) = \lambda (v,v_*)
  P(\de v',\de v'_*|v,v_*).$$
  Then $B$ satisfies the two-particle factorization condition
  \begin{equation}
    \label{eq:condcont}
    \pi(\de v) \pi (\de v_*) \lambda(v,v_*) =
    \int_{(v', v'_*) \in \mc X^2} B(v',v'_*, \de v, \de v_*) \pi(\de v') \pi(\de v'_*).
  \end{equation}
\end{proposition}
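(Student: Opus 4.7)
The plan is to show that both sides of \eqref{eq:condcont} are equal to the measure $cg(\de v, \de v_*)$, using only the definition of $\lambda$ and the invariance equation \eqref{eq:g} for $g$.

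First, I would handle the left-hand side. By the very definition of $\lambda$ as the Radon--Nikodym density of $cg$ with respect to $\pi \otimes \pi$, one has immediately
$$\pi(\de v)\pi(\de v_*)\lambda(v,v_*) = cg(\de v, \de v_*),$$
so the left-hand side of \eqref{eq:condcont} is $cg(\de v, \de v_*)$.

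Next, I would unfold the right-hand side using the definition $B(v',v'_*, \de v, \de v_*) = \lambda(v',v'_*)\, P(\de v, \de v_*|v', v'_*)$:
$$\int_{\mc X^2} B(v',v'_*,\de v,\de v_*)\pi(\de v')\pi(\de v'_*) = \int_{\mc X^2} P(\de v,\de v_*|v',v'_*)\,\lambda(v',v'_*)\,\pi(\de v')\pi(\de v'_*).$$
Recognizing again $\lambda(v',v'_*)\pi(\de v')\pi(\de v'_*) = cg(\de v',\de v'_*)$ and pulling out the constant $c$, this becomes
$$c \int_{\mc X^2} P(\de v,\de v_*|v',v'_*)\, g(\de v',\de v'_*),$$
which by the invariance relation \eqref{eq:g} equals $cg(\de v,\de v_*)$.

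Comparing the two computations yields \eqref{eq:condcont}. There is no genuine obstacle here: the argument is a one-line substitution, and the only subtlety is checking that the formal manipulation of kernels against measures is justified, which follows from Fubini together with the standing assumption that $g \ll \pi \otimes \pi$ so that $\lambda$ is a well-defined measurable density. The content of the proposition is therefore entirely structural: the two-particle factorization condition \eqref{eq:condcont} is, up to the normalization $c$, nothing but the stationarity of $g$ for the kernel $P$, once the rate $\lambda$ is tuned so that $g$ is absorbed into the product reference measure $\pi \otimes \pi$.
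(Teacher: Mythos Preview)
Your argument is correct and is precisely the one the paper has in mind: the paper states only that ``the proof easily follow[s] from the definition of $\lambda$ and $B$,'' and your computation spells this out by reducing both sides of \eqref{eq:condcont} to $cg(\de v,\de v_*)$ via the density relation $cg = \lambda\,\pi\otimes\pi$ and the stationarity \eqref{eq:g}.
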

The proof easily follows from the definition of $\lambda$ and $B$.
We remark
that  $\pi$ is the equilibrium of the
corresponding homogeneous Boltzmann equation.

\section{Some example}
\label{sez:esempi}

The most studied model in kinetic theory
with non-reversible collision rate
is the  Boltzmann equation for granular media
(see \cite{EP97} and the unpublished reference therein of the same authors, and 
\cite{carrillo2021} for a recent survey).
In this model, the particles interact by means inelastic collisions, in
which the energy is dissipated, while the momentum is conserved.

In the homogeneous case,
the asymptotic state is the $\delta$ in the mean velocity,
and the ``kinetic'' entropy $\int \de v \, \tfrac {\de P}{\de v}
\log  \tfrac {\de P}{\de v}$ is diverging; moreover
the relative
entropy w.r.t. to the equilibrium is not finite
for all interesting initial datum (see e.g. the one dimensional case,
treated in \cite{1997_BCP2,toscani00,2001_BP}).
In this case our variational formulation is inapplicable.

On the other hand, in various application of the kinetic theory (e.g. in economy and sociology), 
many interesting models are based on inelastic
interactions with corrections that return energy to the system,
so that there are non-singular asymptotic states (see \cite{toscani2,toscani}).
The major difficulties in handling  these systems
is the lack of an explicit expression of the 
equilibrium, and it is not known if an H-theorem holds
(see also \cite{vulpiani}).

For models which verifies the two-particle factorization condition 
 \eqref{eq:cond}, 
the H-theorem is a simple consequence of 
the variational formulation.
Unfortunately, this condition is not commonly met.
In this section we construct four examples of continuous system
which satisfy the two-particle factorization.
For three of them, assumption \ref{ass:B} are satisfied, 
and then they enjoy the variational formulation and
the consequent H-Theorem.
For the second Kuramoto type example, the analysis is only formal,
and rigorous results require some extra work,
beyond the aims of this paper.

\subsection{Two  Kuramoto  type models}
\label{sez:kuramoto}

The Kuramoto model is a model for phase variables in $\bb S$
which synchronize, due to an attractive pair interaction \cite{Kuramoto}.
Here we consider a transition kernel inspired to this model.

We fix some notation.
For any pair $(\vt,\vt_*)\in \bb S^2$ there exist uniquely 
$\bar {\vt}\in \bb S$ and $\xi \in (-\pi,\pi)$
such that
$$\left\{
  \begin{aligned}
    &\vt = \bar {\vt} + \xi/ 2\\
    &\vt_* = \bar {\vt} - \xi/ 2\\
  \end{aligned}
\right.
$$
Moreover, $\de \vt  \de \vt_* = \de \bar {\vt} \de \xi$.

The arc-length distance is defined as $d(\vt,\vt_*) = \min_{k\in \bb Z} |\vt - \vt_* - 2k\pi|$.
Let $I_r(\bar {\vt})$ be
the arc of length $r<\pi$ centered in $\bar {\vt}$, i.e.
$$I_r(\bar {\vt}) = \{ \vt| \, d(\vt,\bar {\vt}) < r/2 \}.$$

We consider a transition probability from the incoming pair $(\vt,\vt_*)$
to the outgoing pair  $(\vt',\vt_*')$ such that $\vt'$, $\vt_*'$ are
independently  uniformly
distributed in an arch centered in $\bar \vt$ of length depending on the distance
$|\xi|=d(\vt,\vt_*)$.
More precisely, fix a function  $a(|\xi|) :[0,\pi]\to [0,\pi]$ and
define
$$P(\de \vt',\de \vt'_*| \vt,\vt_*) =
\frac 1{a(|\xi|)^2}
\id_{\{\vt' \in I_{a(|\xi|)}(\bar {\vt})\}}
\id_{\{\vt'_* \in I_{a(|\xi|)}(\bar {\vt})\}}
\de \vt'
\de \vt'_*
$$
Observe that the one-marginal is given by 
\begin{equation}
  \label{eq:P1}
  P_a(\de \vt'| \vt,\vt_*) =
  \frac 1{a(|\xi|)}
\id_{\{\vt' \in I_{a(|\xi|)}(\bar {\vt})\}} \de \vt'
\end{equation}
where we use the subscript to specify the dependence on the function $a$.

Consider for instance $a(|\xi|) = |\xi|$;
in this case, $\vt'$ and $\vt'_*$ are uniformly distributed in
the arc between $\vt$ and $\vt_*$, then almost
surely $d(\vt',\vt'_*) < d(\vt,\vt_*)$; 
then the  equilibrium state is the
singular measure $\delta_{\tilde \vt}(\de \vt)$
for some $\tilde {\vt}$.
In order to obtain a non-reversible binary collision dynamics with
 a non-singular equilibrium,
 we have to add to the model a dispersion mechanics.

\subsubsection*{Example $K_1$}
 As a first example  we introduce a dispersion mechanism such that 
a pair of particles at small distance are spread in the semicircle centered
in $\bar \vt$.
Fix $\delta \in (0,\pi)$ and $a(r) = \pi \id_{ \{ r < \delta\}} + r \id_{
\{r\ge \delta\}}$,  chose $P(\de \vt',\de \vt'_*| \vt,\vt_*) =
P_a(\de \vt'| \vt,\vt_*)P_a(\de \vt'_*| \vt,\vt_*)$,
with  $P_a$ defined in Eq. \eqref{eq:P1}.
As we prove in the appendix \ref{app:kura},
the Markov chain on $\bb S^2$ with transition probability $P_a$ has a
unique invariant measure $g$ which is absolutely continuous
with respect the uniform measure $\de \vt \de \vt_*$,
namely $g(\de \vt, \de \vt_*) = \lambda(\vt, \vt_*) \de \vt \de \vt_*$,
where 
\begin{equation}
  \label{eq:hk1}
  \lambda(\vt, \vt_*) =
  \begin{cases}
    \frac 23  \left( \frac 1\delta - \frac {\delta^2}{\pi^2} \right)
    - \frac 23  \left( \frac 1{\delta^2} + \frac {2\delta}{\pi^3}\right)
    (|\xi| -\delta) & \text{ if } \xi \in [0,\delta] \\
    \frac 23 \left( \frac 1{|\xi|}
      - \frac {|\xi|^2}{\pi^3} \right) & \text{ otherwise}
  \end{cases}
\end{equation}
Here $|\xi| = d(\vt,\vt_*)$.
Then, by proposition \ref{prop:strategia}, 
the collision kernel
$B(\vt,\vt_*,\vt',\vt_*') = \lambda(\vt,\vt_*) P(\vt',\vt'_*|\vt,\vt_*)$
satisfies the two-particle factorization condition, with the uniform
measure on $\bb S$ as the invariant measure,
and the corresponding kinetic equation
has a variational formulation and enjoys the $H$-theorem.

\subsubsection*{Example $K_2$}
As second example
we define a dispersion mechanism such that concentration fails
with small probability, namely with small probability particles
spread on the semicircle centered in $\bar \vt$.
Set $a_0(|\xi|)=|\xi|$ and $a_\pi(|\xi|)=\pi$, fix $\eps \in (0,1)$, and chose
$$P(\de \vt',\de \vt'_*| \vt,\vt_*) = (1-\eps)
P_{a_0}(\de \vt'| \vt,\vt_*)P_{a_0}(\de \vt'_*| \vt,\vt_*)
+ \eps   P_{a_\pi}(\de \vt'| \vt,\vt_*)P_{a_\pi}(\de \vt'_*| \vt,\vt_*).$$
Also in this case there exists a unique invariant measure $g$,
whose density w.r.t. $\de \vt \de \vt_*$ is given by
\begin{equation}
  \label{eq:hk2}
  \lambda(\vt,\vt') = \frac 1{1+2r} \left( \left(\frac \pi{|\xi|}
    \right)^r - \left(\frac {|\xi|}\pi\right)^{1+r}
\right),
\end{equation}
where $r\in (0,1)$ and $r(r+1) = 2(1-\eps)$.
Details
are postponed   in  the appendix.
By proposition \ref{prop:strategia},
the collision kernel
$B(\vt,\vt_*,\vt',\vt_*') = \lambda(\vt,\vt_*) P(\vt',\vt'_*|\vt,\vt_*)$
satisfies the two-particle factorization condition and the invariant
measure is  the uniform measure on $\bb S$.

\vskip.3cm
In both the example,  particles which are close interact more often.
This can be seen  Fig. \ref{fig:h} where the collision rate $\lambda$
as a function of the distance $|\xi|$,
is plotted. This is necessary for having an effective dispersion mechanism,
which assures that the two-particle factorization condition holds.

\begin{figure}[h]
  \centerline{
    \includegraphics[scale=0.3]{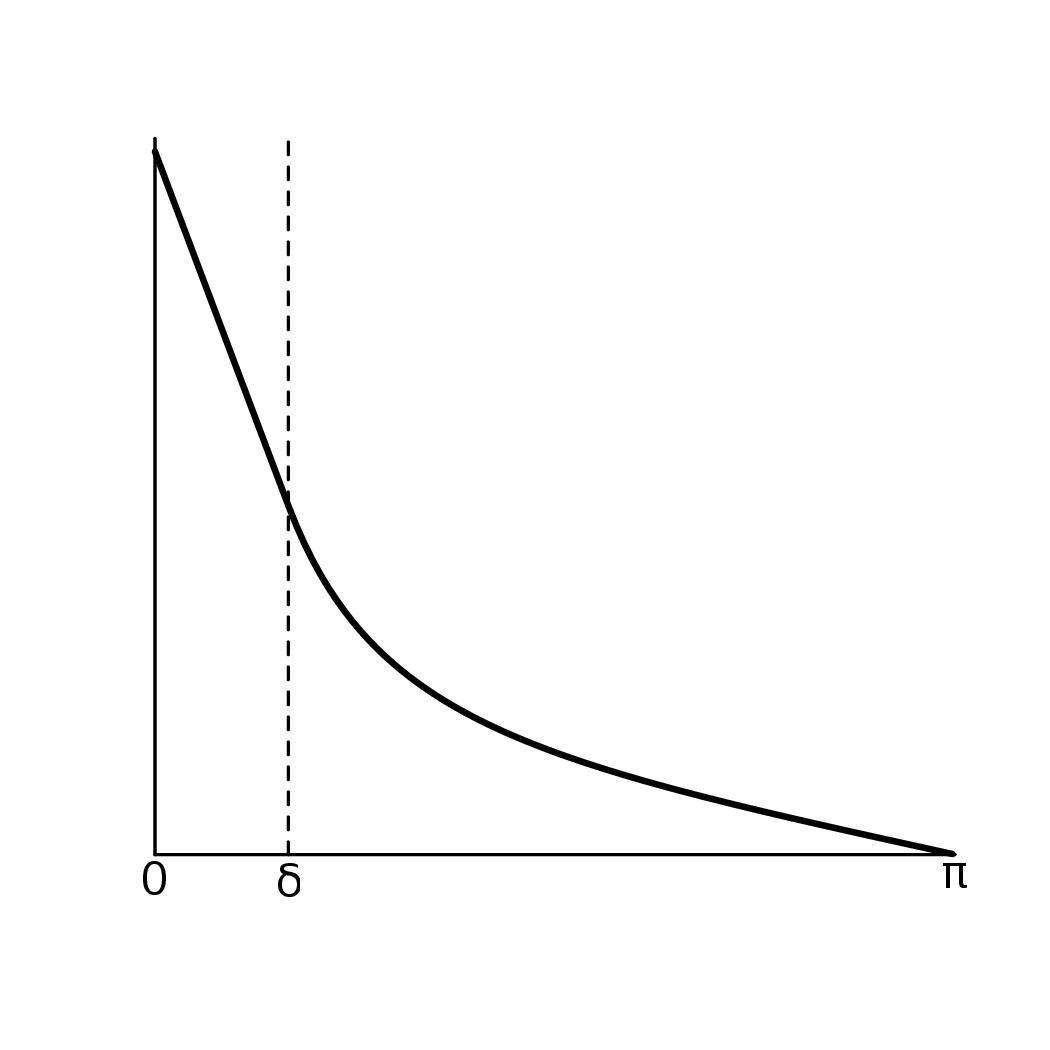}
    \includegraphics[scale=0.3]{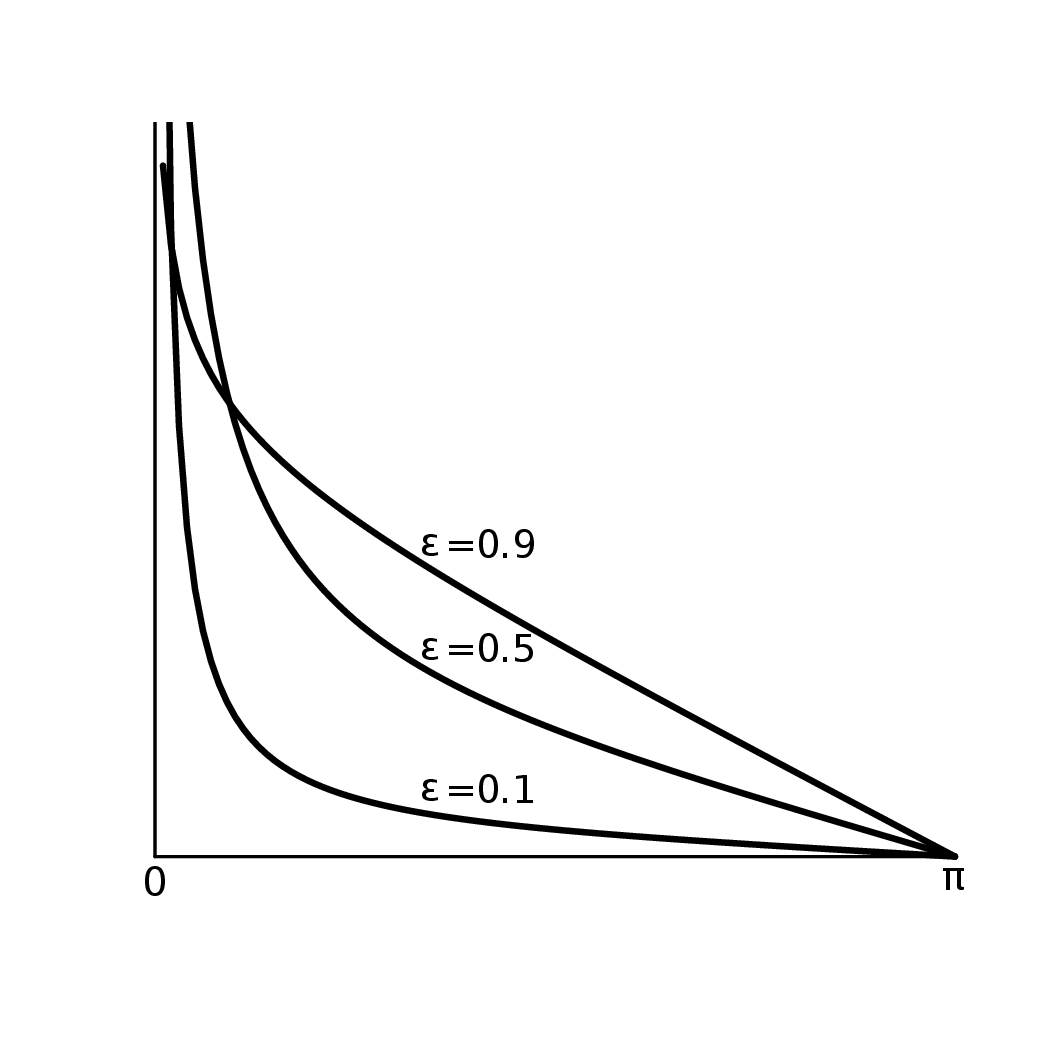}
  }
  \caption{On the left, the collision rate $\lambda$
    for the first example as in Eq. \eqref{eq:hk1} for $\delta = \pi/6$;
    on the right, the collision rate for the second example as in 
    Eq. \eqref{eq:hk2} for $\eps = 0.1, 0.5, 0.9$. }
  \label{fig:h}
\end{figure}

\begin{remark}
  The Kuramoto model for identical oscillators 
  preserves the mean phase.
  This is not true for our model for which ${\bar \vt}'\neq \bar \vt$.
  We can recover this conservation law by using
  transition probabilities of the form
  $$P(\vt',\vt'_*|\vt,\vt_*) = 
  \frac 12 P_a(\de \vt'|\vt,\vt_*)
  \delta_{\bar \vt - \vt'}(\de \vt_*')+
  \frac 12 P_a(\de \vt'_*|\vt,\vt_*)
  \delta_{\bar \vt - \vt_*'}(\de \vt').
$$
\end{remark}

\subsection{Opinion dynamics models}

The second class of models we present is about the opinion formation,
and are inspired of \cite{toscani}.
The opinion is identified with a state $v\in \mc X \coloneqq [-1,1]$;
in a discussion between two agents, their opinions $v,v_*$
change according to
a rule of the following type:
\begin{equation}
  \begin{split}
    &v'= (1-D(|v|)) v + D(|v|) v_* + \xi(v,v_*)\\
    &v_*'= D(|v_*|) v +  (1-D(|v_*|)) v_*   + \xi(v_*, v). 
  \end{split}
\end{equation}
Here $D:[0,1]\to [0, 1/2]$
while $\xi(v,v_*)$
are random variables with zero mean and 
finite variance, chosen such that $v',v'_*\in \mc X$.
If $\xi\equiv 0$ and $D=\eps \in (0,1)$, we recover  the inelastic
collision rule with fixed restitution coefficient $1-2\eps$, 
which preserves the ``mean opinion''
$\frac {v+v_*}2  = \frac {v'+v'_*}2$, but reduces the difference
$|v'-v'_*| = |1-2\eps|\, |v-v_*|$ since $|1-2\eps|<1$.
This inelastic behavior, which holds also if $D$ is non-constant,
takes into account the tendency of interacting people
to bring their opinions
closer together.
The random terms take into account
external factors which can modify the outcome of the interaction.
In a model in which  $v\approx 0$ is a ``weak'' opinion,
and $v \approx \pm 1$ are ``strong'' opinions, it is  assumed
that $D$ is a decreasing function.

\subsubsection*{A symmetric model - example $O_1$}

$~$

We consider $D(|v|) = (1-|v|)/2$, 
and we choose $\xi(v,v_*)$ as a Gaussian variable
with zero mean and variance $\sigma(v)^2 = \delta^2 (1-|v|)^2$,
with $\delta >0$. By  conditioning $\xi(v,v_*)$ to the $v'\in \mc X$, and
$\xi(v_*,v)$ to the $v'_*\in \mc X$, we define the transition probability 
\begin{equation}
  \label{eq:p1}
  \begin{aligned}
  P(\de v',\de v'_*|v,v_*) &=
  \frac 1{ z(v,v_*)}
  M_{\sigma(v)}( v' - ((1-D(v)) v + D(v) v_*))
  \\
  &\times 
  \frac 1{ z(v_*,v) }
  M_{\sigma(v_*)}( v'_* - ((1-D(v_*)) v_* + D(v_*) v'))
 \de v' \de v'_*,  \end{aligned}
\end{equation}
where $M_{\sigma}:\R\to \R$ is the Gaussian
of mean $0$ and
variance $\sigma$, and 
$$z(v,v_*) = \int_0^1 \de v'
M_{\sigma(v)}( v' - ((1-D(v)) v + D(v) v_*)).$$
We have also to symmetrize \eqref{eq:p1}
in order to satisfy item (1) of Assumption \ref{ass:B}.

We numerically find the unique  equilibrium $g$
of the two-particle process, as defined in \eqref{eq:g}. 
Then we look for a strictly positive function $\pi(v)$ and
a strictly positive  function $\lambda(v,v_*)$  decreasing in  $|v-v_*|$,
such that 
$g(\de v,\de v_*) = \lambda(v,v_*)\pi(\de v) \pi(\de v_*)$.
We remark that the decreasing behavior of $\lambda$ 
is what we expect in real relations, where agents with
very different opinions have very few interactions.

In the first graph in  Fig. \ref{fig:mod4}
we represent the values of $g$ obtained  
numerically, with $\delta =1/100$; the values are
increasing from salmon pink to blue.
The small value of $g$ for $(v,v_*) \approx \pm (1,1)$
are due to a boundary effect
related to the discretization.
As a candidate for $\pi$ we have considered a positive power of
the one marginal of $g$.
In the second graph  we choose the power equal to $0.65$.
The resulting $\lambda(v,v_*)$ is shown in the rightmost
graph, and
it turns out to be  approximately
a decreasing function of $|v-v_*|$.
Observe that $\pi$ exhibits
two peaks near the ``extreme'' opinions $\pm 1$,
as showed in the central graph.

\begin{figure}[h]
  \centerline{
    \includegraphics[scale=0.2]{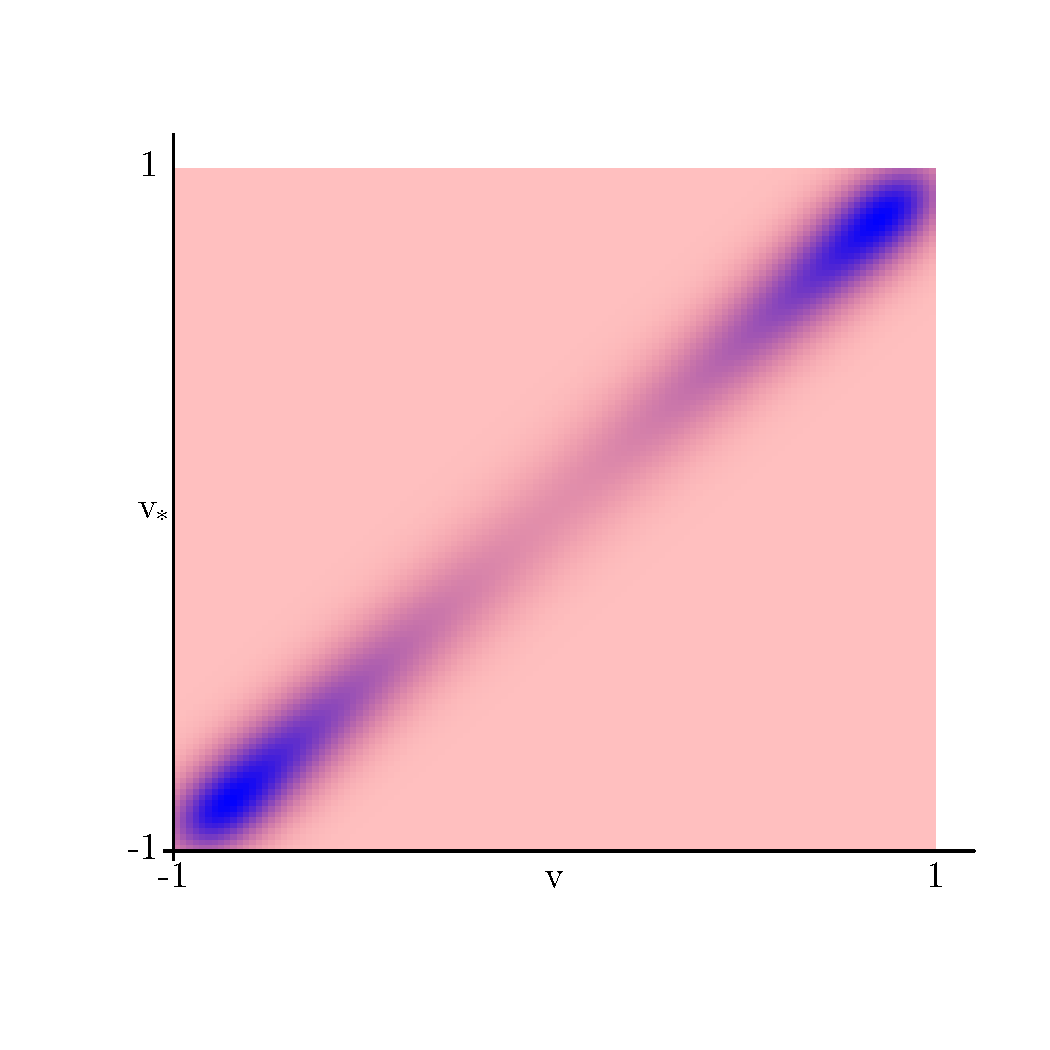}
    \includegraphics[scale=0.2]{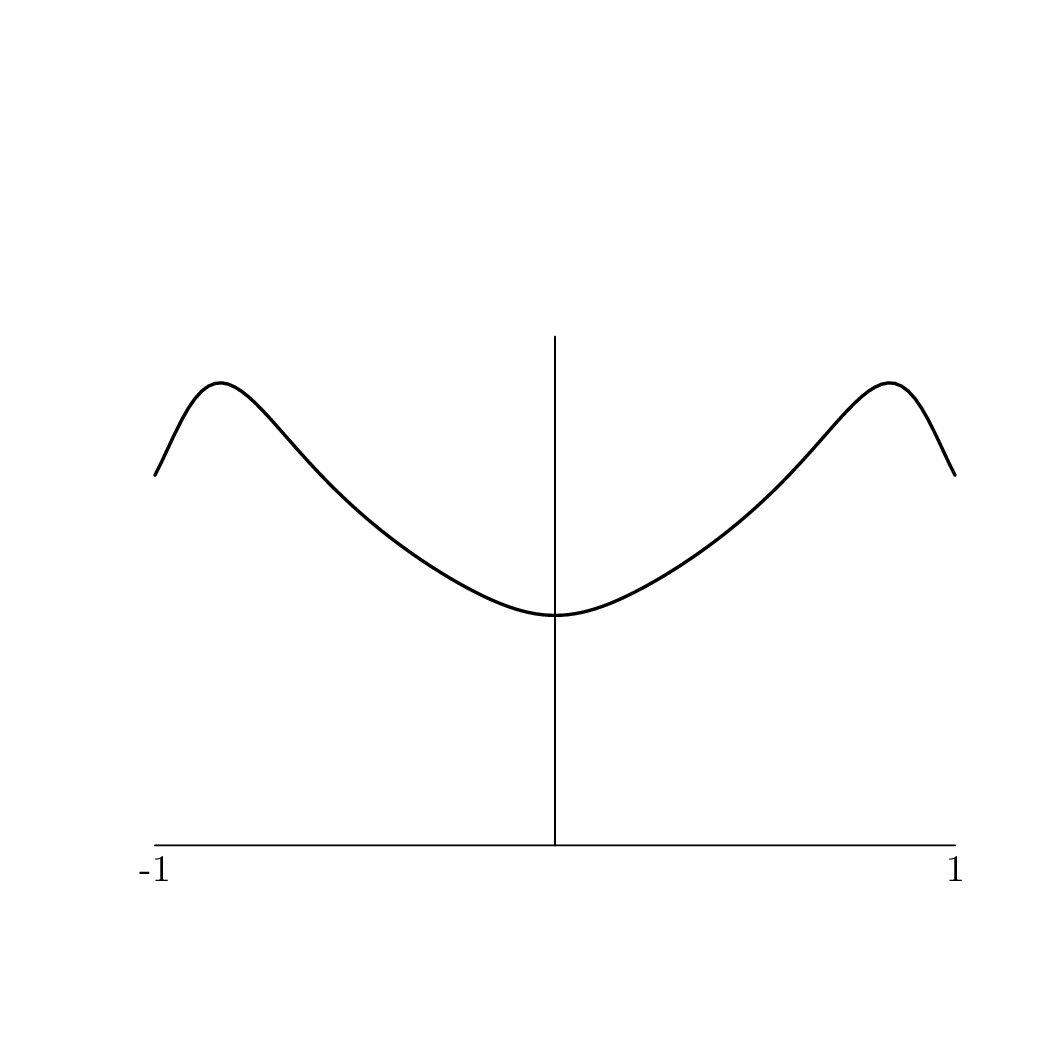}
    \includegraphics[scale=0.2]{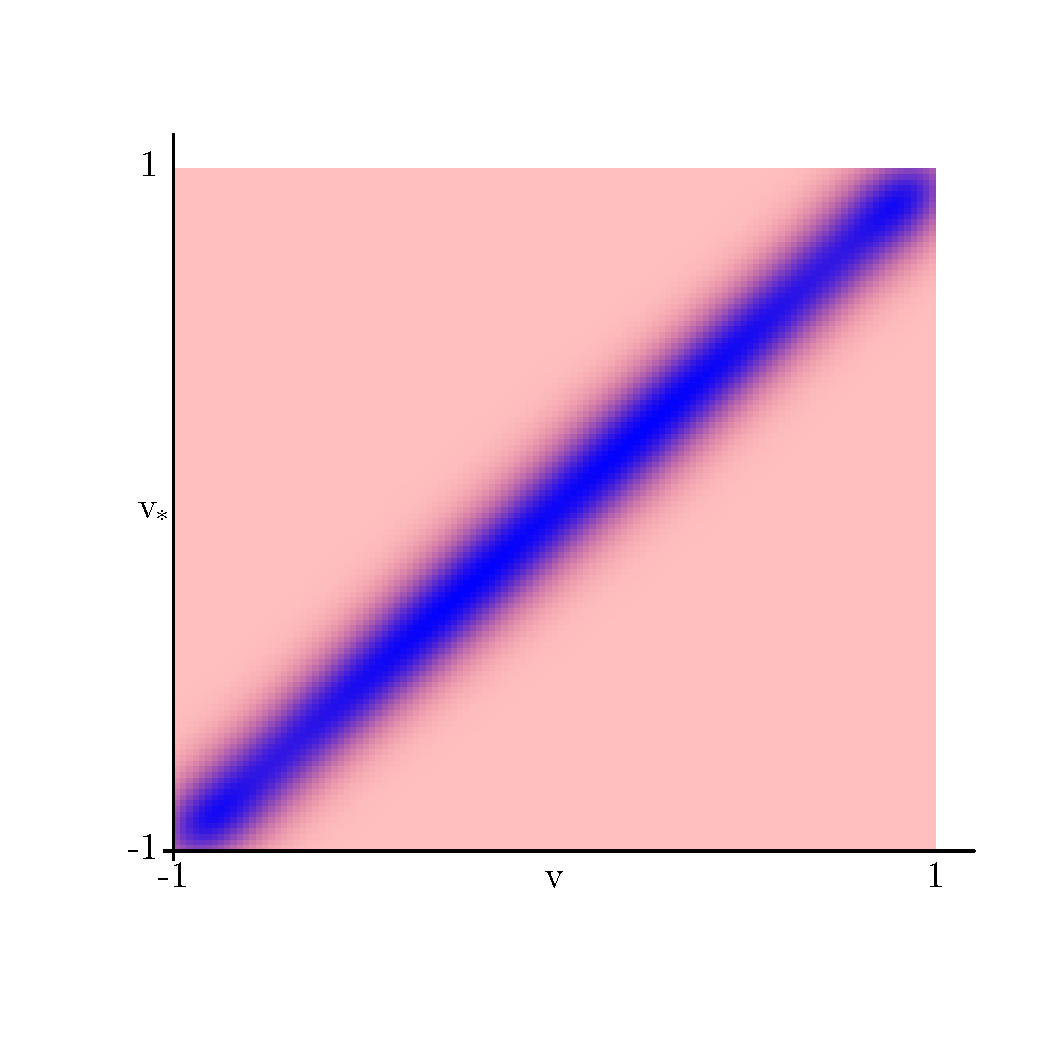}
  }
  \caption{The symmetric model. From the left to the right, the level
    sets of $g$,
     the equilibrium $\pi$ and the level sets of $\lambda$.}
  \label{fig:mod4}
\end{figure}

\subsubsection*{An asymmetric model - example $O_2$}

$~$

We consider a modification of the previous model by introducing a
``repulsion'' mechanism, which strongly increases the transitions from
two agents with opinions close to $1$ to a situation in which one of
the two agents abandons the extreme opinion, i.e.
$(v',v'_*) \approx (1,0)$ or $(0,1)$.  This is obtained multiplying $P$
in \eqref{eq:p1} by a term exponentially small in
$|v-1|^2+|v_*-1|^2+\max(|v'-1|^2+|v'_*|^2, |v'_*-1|^2+|v'|^2)$, and
normalizing.  The results are shown in Fig. \ref{fig:mod5}. At
equilibrium, the opinions $v\approx 1$ flowed back to the neighbor of
0.  It might be interesting to observe the transition from the first
model to this one, slowly varying the parameter that regulates its
relative importance, but this is not the main purpose of this work.

\begin{figure}[h]
  \centerline{
    \includegraphics[scale=0.2]{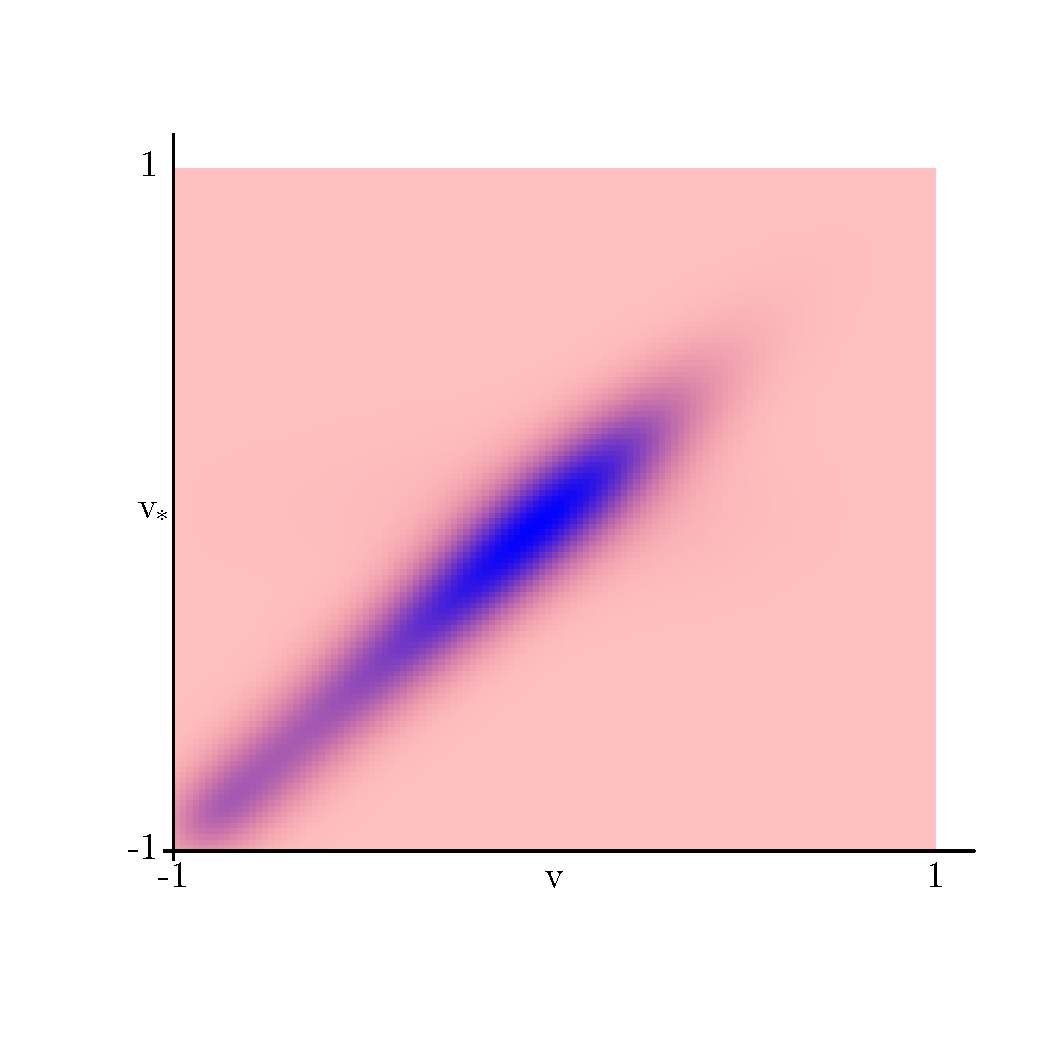}
    \includegraphics[scale=0.2]{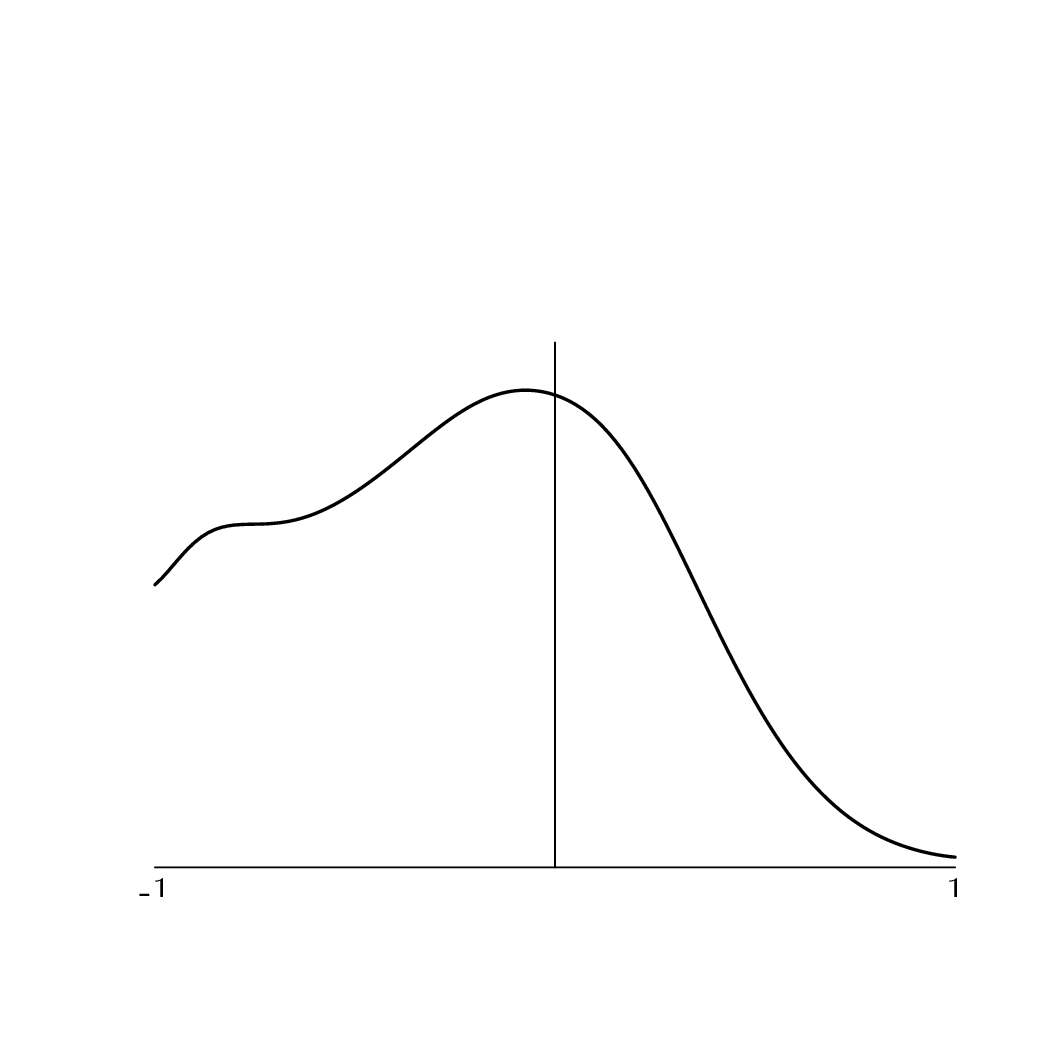}
    \includegraphics[scale=0.2]{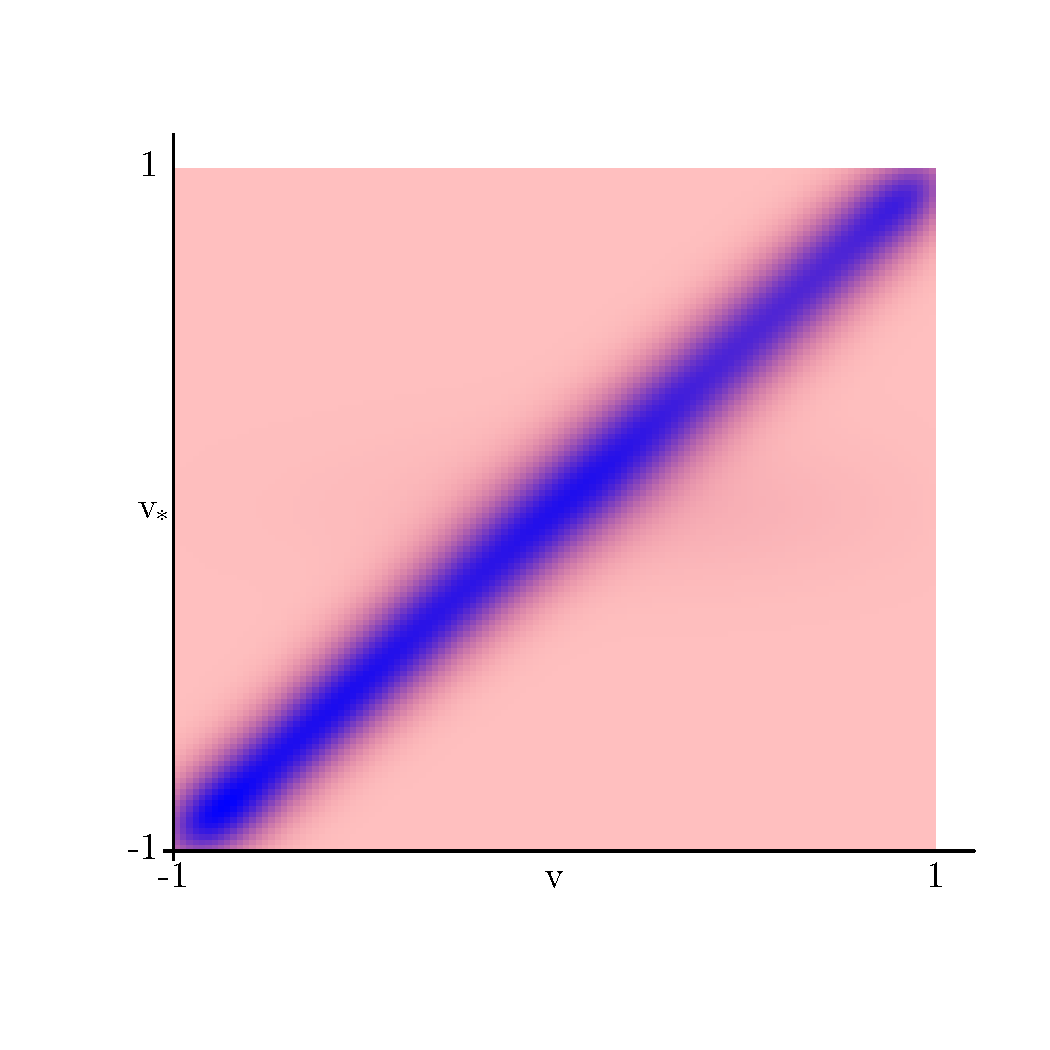}
  }
  \caption{The asymmetric model. 
    From the left to the right, the level sets of $g$,
    the equilibrium $\pi$ and the level sets of $\lambda$.
    The 
    repulsive mechanism can be observed in the graph of $g$.
    In the graph of $\pi$ one can observe that 
    neutral opinions are favored while  opinions near $v=1$ are disfavored.
  }
  \label{fig:mod5}
\end{figure}

\appendix

\section{Entropy balance for non-reversible Markov chains
with bounded transition kernel}
\label{app:prova}

In what follows,
it will be useful to explicit the dependence on $\sigma$ of the typical flux. We will write
$$\mc V^P_{\sigma}(\de t,\de x,\de y) \coloneqq \de t P_t(\de x) \pi(\de y)
\sigma(x,y), \ \ 
\Upsilon_\# \hat {\mc V}^P_{\sigma}(\de t,\de x,\de y)
\coloneqq \de t P_t(\de y) \pi(\de x)
\sigma(x,y).$$

\subsection*{Regularizing kernels}

We suppose that there exist a ``regularizing kernel''  i.e. a family of functions $g_\eps\in C_b(\mc X\times \mc X)$, $\eps >0$
such that 
\begin{enumerate}[i)]
\item $g_\eps(x,y)=g_\eps(y,x)\ge \eps$ for all $(x,y)\in \mc X\times \mc X$;
\item $\int_{\mc X} g_\eps(x,y) \pi(\de y)= 1$ for all $x\in \mc X$;
\item for any bounded Borel function $\phi$,
  the continuous function 
  $$G_\eps \phi(x) \coloneqq \int_{\mc X} g_\eps(x,y) \phi(y) \pi(\de y)$$
  converges $\pi$ a.e. to $\phi(x)$; if $\phi\in C_b(\mc X)$ the convergence
  is pointwise.
\end{enumerate}
The details of the construction of such regularizing kernel depends
on $\mc X$ and
on the equilibrium measure $\pi$. As an example, 
consider $\mc X = \R^d$, $\pi(\de x) = h(x) \de x$ with $h>0$ and
regular. Let $\e^{\eps \mathcal L}$ be the semigroup
associated to the differential operator
$\mathcal L\alpha  = \frac 1h \nabla \cdot  ( h \nabla \alpha)$.
Its kernel $k_\eps$ with respect to $\pi$ is symmetric.
Therefore, $g_\eps = (1-\eps) k_\eps + \eps$ is
a regularizing kernel in the sense specified above.

We remark that the result in this appendix can 
be obtained also by using non-symmetric regularizing kernel,
with a few more details to consider.

\vskip.3cm
We  regularize the probability measure by duality.
Given $\mu \in \ms P(\mc X)$, 
$$(G^*_\eps \mu) (\de x) \coloneqq \pi(\de x)
\int_{\mc X} g_\eps(y,x) \mu(\de y)$$
is a probability measure which 
weakly converges to $\mu$ as $\eps \to 0$.

\vskip.3cm

For the  regularizing in time, we consider a non-negative kernel
$\imath_\eta \in C^1(\R)$, $\eta > 0$ with support in $(0,\eta)$,
which approximate the Dirac measure $\delta_0$ in $\bb R$
as $\eta \to 0$.
Given $h\in C((-\infty,T])$,
we define its time regularization as
$$h^\eta(t) \coloneqq \int_{t-\eta}^t \imath_\eta(t-s) \tilde h(s) \de s.$$
Given $h\in C([0,T])$, we define 
$$
\tilde h(t) = \begin{cases}
  h(t) & \text{ if } t\in [0,T]\\
  h(0) & \text{ if } t<0
\end{cases}\ \text{ and }
\check h(t)= \begin{cases}
  h(t) & \text{ if } t\in [0,T]\\
  0 & \text{ if } t<0,
\end{cases}
$$
whose time regularization are indicated by
$\tilde h^\eta$ and $\check h^\eta$ respectively.
Observe that if $h$ is only summable,
the regularization $h^\eta$ only converges to $h$ in $L^1$ and almost
everywhere.

\vskip.3cm

\subsection*{Regularization of the dynamics}

Consider now a pair $(P,\mc V)\in \SM$, and
assume that the  right-hand side of \eqref{eq:EBI} is finite, i.e. 
$$ \Ent (P_0|\pi) + E(\mc V|\mc V^P_\sigma)<+\infty.$$
First we prove that this condition assure
that $P_t\ll \pi$ for any $t\in [0,T]$.
The continuity equation can be extended to characteristic functions
of compact set $\chi_K$, obtaining, for any $t\in [0,T]$
$$\begin{aligned}
  \int \chi_K(x) P_t(\de x) + \int \chi_{[0,t]}(s)
 &\chi_K(x) \mc V(\de s,\de x,\de y) \\
&= \int \chi_K(x) P_0(\de x) + \int
\chi_{[0,t]}(s) \chi_K(y) \mc V(\de s,\de x,\de y).
\end{aligned}$$
If  $\int \chi_K (y)\pi(\de y)= 0$, 
then $\int \chi_K(y) \mc V^P(\de t, \de x, \de y)=0$.
Conditions $\Ent (P_0|\pi) <+\infty$ and $E(\mc V|\mc V^P_\sigma)<+\infty$ imply $P_0\ll \pi$ and $\mc V\ll \mc V^P$, therefore
$\int  \chi_K(x) P_0(\de x) = 0$ and $\int
 \chi_K(y) \mc V(\de t, \de x, \de y)=0$.
 By the continuity equation,  $\int \chi_A(x) P_t(\de x)  = 0$.
This fact implies that $P_t\ll \pi$ for all $t\in [0,T]$.
From now on, we will denote by  $f(t,x)$ the density of $P_t$ with respect to $\pi$.

Define the regularized pair $(P^\eps, \mc V^\eps)$ as
$P^\eps_t = G^*_\eps P_t$, $t\in [0,T]$ 
and 
$\mc V^\eps = (\id \otimes G^*_\eps \otimes G^*_\eps) \mc V$.
It is easy to verify that $(P^\eps,\mc V^\eps)\in \SM$.
The approximated measure is
$P_t^\eps(\de x) = f^\eps(t,x)  \pi(\de x)$,
with $f^\eps(t,x) =  \int_{\mc X} g_\eps(y,x) P_t(\de y)
= G_\eps f(t,x) \in C_b([0,T],\mc X)$ and bounded away from zero.
Since $\E(\mc V|\mc V_\sigma^P)<+\infty$ implies
$\mc V \ll  \mc V_\sigma^P$, then
$\mc V^\eps(\de t, \de x, \de y)
= \de t \, \pi(\de x) \de \pi (\de y) q^\eps(t,x,y)$,
where 
$q^\eps(t,x,y)$ is {strictly} positive, continuous and bounded in $(x,y)$ for
any $t\in [0,T]$.

Finally, we regularize the extension $\tilde f^\eps$ and
$\check q^\eps$, obtaining
the measure 
$P^{\eps,\eta}_t(\de x) = \pi(\de x) \tilde f^{\eps,\eta}(t,x)$ and the flux  
$\mc V^{\eps,\eta}(\de t,\de x,\de y) =  \de t \, \pi(\de x) \de \pi (\de y)
\check q^{\eps,\eta}(t,x,y)$.
It is easy to verify that also 
$(P^{\eps,\eta}, \mc V^{\eps,\eta})$ belongs to $\SM$.
The regularity of $f^{\eps,\eta}$ and  $q^{\eps,\eta}$ allows us to
write the differential
version of the continuity equation:
$$\partial_t f^{\eps,\eta}(t,x) = \int_{\mc X} \pi(\de y)
( q^{\eps,\eta}(t,y,x) - q^{\eps,\eta}(t,x,y)).$$

\vskip.3cm
Now we can state the entropy balance for the regularized dynamics.
Since $ f^{\eps,\eta}$ is bounded, strictly positive and regular
in time, we can compute the derivative of the relative entropy
$$
\begin{aligned}
  \opde{t} \Ent(P^{\eps,\eta}_t|\pi)
  &=
    \int_{\mc X \times \mc X} \pi(\de x) \pi(\de y) q^{\eps,\eta}(t,x,y) \log \frac {f^{\eps,\eta}(t,y)}{f^{\eps,\eta}(t,x)},
\end{aligned}
$$
which can be written as 
\begin{equation}
  \label{eq:derivata}
\begin{aligned}
  \opde{t} \Ent(P^{\eps,\eta}_t|\pi)
 &=\int_{\mc X \times \mc X} \pi(\de x) \pi(\de y) q^{\eps,\eta}(t,x,y) \log
    \frac {q^{\eps,\eta}(t,x,y)}{(\sigma(x,y) + \gamma) f^{\eps,\eta}(t,x)}
  \\
  &-\int_{\mc X \times \mc X} \pi(\de x) \pi(\de y) q^{\eps,\eta}(t,x,y) \log
    \frac {q^{\eps,\eta}(t,x,y)}{(\sigma(x,y) + \gamma) f^{\eps,\eta}(t,y)}
\end{aligned}
\end{equation}
for any positive $\gamma$.
By Integrating in time \eqref{eq:derivata}
and using that 
$\mc V_{\sigma+\gamma}^P(1) = \Upsilon_\# \hat {\mc V}_{\sigma+\gamma}^P(1)$, 
we obtain  
$$\Ent (P^{\eps,\eta}_T|\pi) + E(\mc V^{\eps,\eta}|\Upsilon_\#
{\hat {\mc V}}_{\sigma + \gamma}^{P^{\eps,\eta}}) =
\Ent (P^{\eps,\eta}_0|\pi) + E(\mc V^{\eps,\eta}|\mc V_{\sigma + \gamma}^{P^{\eps,\eta}}),$$
which is the chain rule of the entropy for regularized
measure-flux pairs, with respect to the process with perturbed rate $(\sigma(x,y) + \gamma)\pi(\de y)$.
By the particular choice of
the time regularizing kernel,  $P^{\eps,\eta}_0$ in the right-hand side is equal
to $P^{\eps}_0$
for any $\eps, \eta\ge 0$.

\subsection*{Convergence}

We prove the convergence of the relative entropy.
By convexity
$$
\Ent(P_0^\eps|\pi) \le \Ent(P_0|\pi).$$
Recall that $P_0^\eps\to P_0$ weakly as $\eps \to 0$.
Using the previous inequality and the lower semi-continuity
of the relative entropy  we conclude that 
$$\lim_{\eps \to 0} \Ent(P^\eps_0|\pi) = \Ent(P_0|\pi).$$

We prove the convergence of the flux term.
For any $\gamma> 0$
$$\E(\mc V| \mc V_{\sigma + \gamma}^P) \leq \E(\mc V| \mc V_\sigma^P) + \gamma < +\infty,$$
and
$$\E(\mc V| \mc V_{\sigma + \gamma}^P)
= \E(\mc V| \mc V_1^P) + \mc V\big(\log \frac 1 {\sigma + \gamma}\big) +
\mc V_{\sigma+\gamma }^P(1) - T.$$
Since $|\log (\sigma + \gamma)|$
is bounded, all terms in
the right-hand side are bounded.
Analogously,
$$ \E(\mc V^{\eps,\eta}| \mc V_{\sigma + \gamma}^{P^{\eps,\eta}})  =
\E(\mc V^{\eps,\eta}| \mc V_1^{P^{\eps,\eta}})+
\mc V^{\eps,\eta}\big(\log \frac 1 {\sigma + \gamma}\big)
+\mc V_{\sigma+\gamma}^{P^{\eps,\eta}}(1)   - T$$
By convexity and semi-continuity
$$
\lim_{\eps,\eta \to 0}\E(\mc V^{\eps,\eta}| \mc V_1^{P^{\eps,\eta}}) =  \E(\mc V| \mc V_1^P).$$
$\mc V_{\sigma }^{P^{\eps,\eta}}(1)$ converges to $\mc V_{\sigma}^P$.
Moreover, since $\sigma + \gamma$ is strictly positive and bounded and
  $\mc V\ll \pi\otimes \pi$, 
$$\lim_{\eps,\eta  \to 0} \mc V^{\eps,\eta}(\log (\sigma+\gamma)) =  \mc V(\log (\sigma+\gamma)).$$
Collecting these facts, we obtain
$$\lim_{\eps \to 0}  \E(\mc V^{\eps,\eta}| \mc V_{\sigma + \gamma}^{P^{\eps,\eta}}) =  \E(\mc V| \mc V_{\sigma + \gamma}^P)$$
Finally, by dominated convergence
$\E(\mc V| \mc V_{\sigma+\gamma}^P)$ goes to $\E(\mc V| \mc V_{\sigma}^P)$ as $\gamma \to 0$.

\vskip.3cm

Passing into the limit $\eps\to 0$, $\eta \to 0$,
$\gamma\to 0$ in the chain rule for the regularized dynamics, 
by lower semi-continuity, 
we obtain that also the left-hand side of \eqref{eq:EBI} is finite:
$$\Ent(P_T|\pi) + E(\mc V|\Upsilon_\# \hat {\mc V}^P) \le \Ent(P_0|\pi) + 
E(\mc V|{\mc V}^P)<+\infty.$$

To conclude the proof, we obtain the reverse inequality 
by repeating the argument for the reversed trajectory. In this case we
have to use the time regularization
that preserves $P_T$ instead of $P_0$.

\section{Invariant measure for Kuramoto type model}
\label{app:kura}

The expressions of the collision rates $\lambda$
in \eqref{eq:hk1} and \eqref{eq:hk2},
for the two Kuramoto type models,
are obtained by solving Eq. \eqref{eq:g} for $g$, and setting
$g(\de \vt, \de \vt_*)=\lambda(\vt,\vt_*)\de \vt \de \vt_*$.
In this way the equilibrium distribution is the uniform probability measure on
$\bb S$.

We consider the first model, for which
$$P(\de \vt',\de \vt'_*| \vt,\vt_*) =
P_a(\de \vt'| \vt,\vt_*)P_a(\de \vt'_*| \vt,\vt_*)$$
where  $a(r) = \pi \id_{ r < \delta} + r \id_{r\ge \delta}$
and $P_a$ is defined in \eqref{eq:P1}.
We can express $P$ in terms of $\bar {\vt '}$ and $\xi'$,
where $\vt'= \bar {\vt'} + \xi'/2$, $\vt'_*= \bar {\vt'} - \xi'/2$:
\begin{equation}
\label{eq:P1P1}
  P_a(\de \vt'| \vt,\vt_*)P_a(\de \vt'_*| \vt,\vt_*) 
  = \frac 1{a^2(|\xi)} \id_{ a(|\xi|) > |\xi'|}\,
  \id_{ |\bar {\vt} -\bar{ \vt'}| < a(|\xi|)/2 - |\xi'|/2} \de \bar {\vt '}\de \xi'.
\end{equation}
It is easy to understand that 
$\lambda$ is transitionally
invariant because the dynamics are, then $\lambda(\vt,\vt') = h(\xi)$ which 
satisfies
\begin{equation*}
  h(\xi) = \int_{\bb S} \de \bar \vt \int_0^\pi \de \xi'
  \frac 1{a^2(|\xi')} \id_{ a(|\xi'|) > |\xi|}
  \id_{ |\bar {\vt} -\bar{ \vt'}| < a(|\xi'|)/2 - |\xi|/2} h(\xi')
\end{equation*}
By integrating in $\bar \vt$ and observing that $h(\xi) = h(|\xi|)$,
this equation becomes, for $\xi\in [0,\pi]$, 
\begin{equation}
  \label{eq:h1}
  h(\xi) = 2 \int_0^\pi \de \xi' 
  \frac {[a(|\xi'|) - |\xi|]^+}{a^2(|\xi')}  h(\xi')
\end{equation}
It is easy to prove that: $h\in C^1([0,\pi])$,
is affine in $[0,\delta]$, verifies
$h(\pi) = 0$, $h'(\pi) = -2 C$ whit $C= \int_0^\delta h / \pi^2$,
and for $\xi > \delta$
$$h''(\xi) = 2h(\xi) / \xi^2.$$
This second order linear equation is solved by linear combination
of $\xi^{-1}$ and $\xi^2$. Imposing the compatibility condition
in $\xi = \delta$ we obtain 
\eqref{eq:hk1}, where we fixed $C=1$.
We remark that if $\delta \to 0$ the equilibrium $h$ becomes proportional to
$\delta_0(\de \xi)$ as expected, since the dispersion mechanism is removed.

In the second model,
$a_0(r)=r$ and $a_\pi(r)=\pi$, fix $\eps \in (0,1)$, and 
$$P(\de \vt',\de \vt'_*| \vt,\vt_*) = (1-\eps)
P_{a_0}(\de \vt'| \vt,\vt_*)P_{a_0}(\de \vt'_*| \vt,\vt_*)
+ \eps   P_{a_\pi}(\de \vt'| \vt,\vt_*)P_{a_\pi}(\de \vt'_*| \vt,\vt_*).$$
Also in this case $\lambda$ is of the form 
$\lambda(\vt,\vt') = h(\xi)$; now $h$ verifies the equation
$$
h(\xi) = 2 (1-\eps)\int_{\xi}^\pi \de \xi' 
\frac {\xi' - \xi}{{\xi'}^2}  h(\xi')
+ 2\eps (\pi - \xi) K
$$
where $K=\int_0^\pi h/\pi^2$.
Note that $h(\pi) = 0$.
Assuming $h\in C^2(\bb S)$, by deriving in $\xi$ the equation
we obtain that $h'(\pi) = 2\eps \pi$.
By deriving two times we have that
$$h''(\xi) = 2(1-\eps) \frac {h(\xi)}{\xi}.$$
Then $h$ is a linear combination of $\xi^{-r}$ and $\xi^{1+r}$,
where $r\in (0,1)$ and $r(r+1) = 2(1-\eps)$.
By imposing the two conditions in $\xi = \pi$ we get
\eqref{eq:hk2}, where we have fixed $K=1$.
Also in this case when $\eps \to 0$ the equilibrium $h$ becomes a $\delta$ function.

\section*{Acknowledgments}

D. Benedetto  would like to thank GNFM - INdAM.

\section*{Declarations}

The authors have been  supported  by  PRIN
202277WX43 ``Emergence of condensation-like phenomena in interacting
particle systems: kinetic and lattice models'',  founded by the European
Union - Next Generation EU.

The authors have no conflicts of interest to declare.

Data sharing is not applicable to this article.

\end{document}